\newcommand{\R}{\mathbb{R}}
\newcommand{\OO}{\mathcal{O}}
\newcommand{\D}{\mathrm{d}}
\newtheorem{claim}{Claim}[section]
\newtheorem{thm}[claim]{Theorem}
\newtheorem{prop}[claim]{Proposition}
\newtheorem{lem}[claim]{Lemma}
\newtheorem{rems}[claim]{Remarks}
\begin{document}

\title[Approximations of vertex couplings by singularly scaled potentials]
{Approximations of quantum-graph vertex couplings by singularly scaled potentials}

%    Information for first author
\author{Pavel Exner}
\address{Doppler Institute for Mathematical Physics and Applied
Mathematics, \\ Czech Technical University in Prague,
B\v{r}ehov\'{a} 7, 11519 Prague, \\ and  Nuclear Physics Institute
ASCR, 25068 \v{R}e\v{z} near Prague, Czechia} \ead{exner@ujf.cas.cz}

%    Information for second author
\author{Stepan S. Manko}
\address{Department of Physics, Faculty of Nuclear Science and Physical Engineering, \\ Czech Technical University in Prague,
Pohrani\v{c}n\'{\i} 1288/1,  40501 D\v{e}\v{c}\'{\i}n,
Czechia\footnote{On leave of absence from Pidstryhach Institute for Applied Problems of Mechanics and Mathematics, National Academy of Sciences of Ukraine, 3b Naukova str, 79060 Lviv, Ukraine}} \ead{stepan.manko@gmail.com}

\begin{abstract}
We investigate the limit properties of a family of Schr\"odinger operators of the form $H_\varepsilon= -\frac{\mathrm{d}^2}{\mathrm{d}x^2}+ \frac{\lambda(\varepsilon)}{\varepsilon^2}Q \big(\frac{x}{\varepsilon}\big)$
acting on $n$-edge star graphs with Kirchhoff conditions imposed at the vertex.
The real-valued potential $Q$ is supposed to have compact support and $\lambda(\cdot)$ to be analytic around $\varepsilon=0$ with $\lambda(0)=1$. We show that if the operator has a zero-energy resonance of order $m$ for $\varepsilon=1$ and $\lambda(1)=1$, in the limit $\varepsilon\to 0$ one obtains the Laplacian with a  vertex coupling depending on $1+\frac12 m(2n-m+1)$ parameters. We prove the norm-resolvent convergence as well as the convergence of the corresponding on-shell scattering matrices. The obtained vertex couplings are of scale-invariant type provided $\lambda'(0)=0$; otherwise the scattering matrix depends on energy and the scaled potential becomes asymptotically opaque in the low-energy limit.
\end{abstract}

%Uncomment for PACS numbers title message
%\pacs{00.00, 20.00, 42.10}
% Keywords required only for MST, PB, PMB, PM, JOA, JOB?
% Keywords: hedgehog manifolds, Weyl asymptotics, quantum graphs, resonances.
%\vspace{2pc}
%\noindent{\it Keywords}: Article preparation, IOP journals
% Uncomment for Submitted to journal title message
%\submitto{\JPA}
% Comment out if separate title page not required
\maketitle

%%%%%%%%%%%%%%%%%%%%%%%%%%%%%%%%%%%%%%%%%%%%%%%%%%%%%%%%%%
%%  INTRODUCTION                                        %%
%%%%%%%%%%%%%%%%%%%%%%%%%%%%%%%%%%%%%%%%%%%%%%%%%%%%%%%%%%
\section{Introduction}

Quantum graphs attracted a lot of attention since their rediscovery in the second half of the 1980's; we refer to the recent monograph \cite{BeKu13} for a broad overview and an extensive bibliography. One of the central questions in these models concerns the way in which the wave functions are coupled at the graph vertices. If $n$ edges meet at a vertex, in the absence of external fields the requirement of probability current conservation leads to the condition
 % -------------- %
\[
(U-I)\Psi(0) + \mathrm{i}(U+I)\Psi'(0) = 0
\]
 % -------------- %
coupling the vectors of boundary values of the wave functions and their derivatives,
in which $U$ is an $n\times n$ unitary matrix. This tells us, in particular, that such a coupling may depend on $n^2$ real parameters.

Different $U$ give rise to different dynamics on the graph and the choice of $U$ should be guided by the physical contents of the model one is constructing, in the first place by properties of the junctions to which the graph vertices should represent an idealized description. A natural approach to this problem is to start from the most simple coupling, often called Kirchhoff, and to investigate how the junction properties are influenced by a potential supported in the vicinity of the vertex, in particular if the support shrinks to a point and the potential is properly scaled. It is easy to obtain in this way the so-called $\delta$ coupling using the scaling which preserves mean value of the potential \cite{Ex96b}. However, this is just a one-parameter subset of all the admissible matching conditions, and the other ones require a different limiting procedure, for instance, using shrinking potentials with a more singular scaling of the type $Q(\cdot) \mapsto \varepsilon^{-2}\,Q \big(\frac{\cdot} {\varepsilon}\big)$.

Such limits were investigated first for vertices connecting two edges, which is equivalent to generalized point interaction on the line \cite{AGHH05}, with the conclusion that the limit is trivial describing disconnected edges \cite{Seb86}. Later it was pointed out, however, that such a claim holds only generically and a nontrivial limit may exist when the potential $Q$ has a zero-energy resonance -- cf.~\cite{CAZEG03} and subsequent papers of these authors, see also \cite{GolMan09, GolHry10, GolHry13}. One has to stress, however, that the role of zero-energy resonances in the limit was in fact known before; one can find it in the analysis of the one-dimensional low-energy scattering \cite{BoGeWi85}. The result was further generalized, in particular, to Schr\"odinger operators on star graphs \cite{Man10, Man12} or to combinations of potentials with different scaling \cite{Gol13}.

One should mention that an inattentive reader may run into a terminological confusion in this area coming from the fact that the question addressed in the original \v{S}eba's paper \cite{Seb86} concerned the possibility of approximating the $\delta'$ interaction. This name was given thirty years ago, maybe not quite fortunately, to a class of point interactions characterized by an effective Neumann decoupling in the high-energy limit \cite[Sect.~I.5]{AGHH05}. Although the answer found in \cite{Seb86} was negative and the omitted non-generic limits found later described a different type of point interactions, they are nevertheless sometimes labeled as $\delta'$. One of the conclusions of this paper is that similar limits on star graphs lead to couplings which are \emph{not} of the $\delta'$ type in the conventional sense \cite{AGHH05, Ex96}.

The present study can be regarded as an extension of the previous work of one of us \cite{Man10, Man12} where such limits were studied on three-legged star graphs. Those results are generalized here in several ways. First of all, we consider a star graph of $n$ edges with an arbitrary finite $n$. Equally important, we consider potential families of the type $\frac{\lambda(\varepsilon)}{\varepsilon^2}Q \big(\frac{x}{\varepsilon}\big)$ with an $\varepsilon$-dependent coupling parameter, which allows us to obtain a family of limiting couplings which will contain not only scale-invariant matching conditions --- this may happen provided $\lambda'(0)\ne 0$ --- including graph Hamiltonians with a nonempty discrete spectrum. We note that in the case $n=2$ this conclusion reduces to a particular case of the two-scale limit result of Golovaty \cite{Gol12, Gol13}.

We shall establish the norm-resolvent convergence of the scaled operator families as well as the convergence of the corresponding on-shell scattering matrices, in contrast to \cite{Man10},
where the part of the scattering matrix was studied. In the next section we formulate the problem rigorously and state our main results in Theorems~\ref{thm:operator} and \ref{thm:Scat}; the following two sections are devoted to their proofs and discussion.

%%%%%%%%%%%%%%%%%%%%%%%%%%%%%%%%%%%%%%%%%%%%%%%%%%%%%%%%%%
%%  PRELIMINARIES                                       %%
%%%%%%%%%%%%%%%%%%%%%%%%%%%%%%%%%%%%%%%%%%%%%%%%%%%%%%%%%%
\section{Preliminaries and main results}

We begin with recalling a few basic notions from the theory of differential equations on graphs. A metric graph $G =(V,E)$ is identified with finite sets $V=V(G)$ of \emph{vertices} and $E=E(G)$ of \emph{edges}, the latter being isomorphic to (finite or semi-infinite) segments of the real line. One can think of the graphs as being embedded into the Euclidean space, with the vertices being points of $\R^3$ and edges smooth
regular curves connecting them, but such an assumption plays no role in the following.

A map $f:G\to\mathbb{C}$ is said to be a \emph{function on the graph} and its restriction to the edge $e\in E(G)$ will be denoted by $f_e$.
Each edge has a natural parametrization; if $G$ is embedded into $\mathbb{R}^3$ it is given by the arc length of the curve representing the edge.
A differentiation is always related to this natural length parameter.
Vertices are endpoints of the corresponding edges; we denote by $\frac{\D f}{\D e}(a)$ the limit value of the derivative at the point $a\in V(G)$ taken conventionally in the outward direction, i.e. away from the vertex. The integral $\int_G f\,\D s$ of $f$ over $G$ is the sum of integrals over all edges, the measure being the natural Lebesgue measure. Using this notion we can introduce the Hilbert space $L^2(G)$ with the scalar product $(f,h)=\int_Gf\bar h\,\D s$, and furthermore, the Sobolev space $H^2(G)$ on the graph with the norm
 % -------------- %
\[
\|f\|_{H^2(G)}=\big(\|f\|_{L^2(G)}+\|f''\|_{L^2(G)}\big)^{1/2}.
\]
 % -------------- %
Observe that neither the function belonging to $H^2(G)$ nor its derivative should be continuous at the graph vertices.
In what follows, the symbol $BC(G)$ will stand for the Banach space of functions that are continuous and bounded on each edge $e\in E(G)$ with the supremum norm.
In a similar way, we introduce the Banach space $BC^1(G)$ consisting of functions that are continuous and bounded on each edge $e\in E(G)$ along with their first derivatives.
If $f\in BC^1(G)$, then the symbol $\|f\|_{BC^1(G)}$ stands for the sum of its supremum and that of its first derivative.
We shall also need the space $C^\infty(G)$ of functions infinitely differentiable on each edge of the graph $G$, and finally, $L^\infty(G)$ will stand for the standard Lebesgue-measurable function space on the graph with the essential-supremum norm.

We say that a function $f$ satisfies the Kirchhoff conditions at the vertex $a\in V(G)$ if $f$ is continuous at this vertex and $\sum_e \frac{\D f}{\D e}(a) = 0$ holds, where the sum is taken over all the edges incident in $a$; in the particular case when there is only one such edge $e$ the Kirchhoff conditions at the ``hanging'' vertex $a$ reduce to
the usual Neumann condition, $\frac{\D f}{\D e}(a)=0$.
The symbol $K(G)$ shall denote the set of functions on $G$ obeying the Kirchhoff conditions at each graph vertex.

Since our approximation problem is of a local character we focus on noncompact star-shaped graphs $\Gamma$ consisting of $n$ semi-infinite edges $\gamma_1,\dots, \gamma_n$ connected at a single vertex denoted by $a$. In that case $E(\Gamma)=\{\gamma_i\}_{i=1}^n$ and without loss of generality we may identify each $\gamma_i$ with the halfline $[0,\infty)$. Our consideration will need neighborhoods of the vertex; if $a_i$ stands for an arbitrary but fixed point of $\gamma_i$,
we denote by $\omega_i$ part of the edge $\gamma_i$ connecting the root vertex $a$ and the point $a_i$ and then introduce the compact star graph $\Omega$ with edges $E(\Omega)=\{\omega_i\}_{i=1}^n$ and vertices $V(\Omega)=\{a\}\cup\{a_i\}_{i=1}^n$.
With the chosen parametrization in mind, we use the symbol $a_i$ for both the vertex and its distance from $a$.

Given a star graph $\Gamma$, we introduce the following family of Schr\"{o}dinger operators on $L^2(\Gamma)$ labeled by the parameter $\varepsilon\in(0,1]$,
 % -------------- %
\[
H_\varepsilon=-\frac{\mathrm{d}^2}{\mathrm{d}x^2}+
\frac{\lambda(\varepsilon)}{\varepsilon^2}Q
\Big(\frac{x}{\varepsilon}\Big),
\qquad \mathrm{dom}\,H_\varepsilon=H^2(\Gamma)
\cap K(\Gamma),
\]
 % -------------- %
where the real-valued potential $Q$ belongs to the class $L^\infty(\Gamma)$ and has a compact support supposed to be a subset of the graph $\Omega$ constructed above. In fact, we identify $\Omega$ with $\mathrm{supp}\,Q$ unless the intersection of the support with some $\gamma_i$ is the vertex $a$ only, in which case we choose $a_i>0$ on such edges. With respect to the edge indices $Q$ may be regarded as an $n\times n$ matrix function on $[0,\infty)$; we stress that it need not be diagonal. In a similar vein the differential part of $H_\varepsilon$ is a shorthand for the operator which acts as the negative second derivative on each edge $\gamma_i$. The function $\lambda(\cdot)$ in the above expression is supposed to be
real-valued for real $\varepsilon$ and
holomorphic in the vicinity of the origin. In addition, it  satisfies the condition
 % -------------- %
\[
\lambda(\varepsilon)=1+\varepsilon\lambda+\OO(\varepsilon^2),
\quad \varepsilon\to0,
\]
 % -------------- %
where $\lambda$ is a real number. Our main goal in this paper is to investigate convergence of the operators $H_\varepsilon$ as $\varepsilon\to 0$ in the norm-resolvent topology.

To state the results, we need a few more notions. First of all, we denote by $\Omega_\varepsilon$ the $\varepsilon$-homothety of the graph $\Omega$, centered at $a$, i.e. the subgraph of $\Gamma$ with the vertices $V(\Omega_\varepsilon):= \{a\}\cup\{a^\varepsilon_i\}_{i=1}^n$, where $a^\varepsilon_i= \varepsilon^{-1} a_i$.
In particular, if $\Omega= \mathrm{supp}\,Q$, then $\Omega_\varepsilon$ is the support of $Q(\varepsilon^{-1}\cdot)$.

Furthermore, we shall say that a Schr\"{o}dinger operator in ${L}^2(\Gamma)$ of the form
 % -------------- %
\[
S:=-\frac{\mathrm{d}^2}{\mathrm{d}x^2}+Q
\]
 % -------------- %
satisfying Kirchhoff conditions has
a \emph{zero-energy resonance of order} $m$
if there exist $m$ linearly independent \emph{resonant solutions} $\psi_1,\dots,\psi_m$ to the equation
 % -------------- %
\begin{equation}\label{PreResonant}
-\psi''+Q\psi=0
\end{equation}
 % -------------- %
which are bounded on $\Gamma$. As in the particular case considered earlier in \cite{Man12}, the limit behavior of $H_\varepsilon$ will depend crucially on the existence and properties of a zero-energy resonance of the operator $S$. Since every bounded solution of the equation (\ref{PreResonant}) is constant outside the support of $Q$, it follows that $\psi_i$ solves the Neumann problem
 % -------------- %
\begin{eqnarray} %\begin{align}
%\begin{aligned}
\label{Resonant}
-\psi''+Q\psi=0
\quad\mathrm{on}\quad\Omega,\qquad
\psi\in{K}(\Omega),
%\end{aligned}
%\end{align}
\end{eqnarray} %\end{align}
 % -------------- %
hence the existence of a zero-energy resonance can (and shall) be reformulated in terms of the problem (\ref{Resonant}):
we say that the Schr\"{o}dinger operator $S$ has a zero-energy resonance of order $m$ if there are $m$ linearly independent (resonant) solutions to the problem (\ref{Resonant}).

If $m>1$, the construction we are going to present below requires a particular basis in the space of solutions of the problem (\ref{Resonant}). Consider an arbitrary pair $\varphi_1,\,\varphi_2$ of linearly independent solutions; then there must exist two different vertices $b_1$ and $b_2$ from the set $\{a_i\}_{i=1}^n$ such that $\varphi_1(b_1)\neq0$ and $\varphi_2(b_2)\neq0$. If the quantity $\phi:=\varphi_1(b_1) \varphi_2(b_2) -\varphi_2(b_1)\varphi_1(b_2)$ is nonzero, we can define the functions
 % -------------- %
\[
\psi_1:=
\big(\varphi_2(b_2)\varphi_1-\varphi_1(b_2)\varphi_2\big)
/\phi,
\qquad
\psi_2:=
\big(\varphi_1(b_1)\varphi_2-\varphi_2(b_1)\varphi_1\big)
/\phi.
\]
 % -------------- %
Furthermore, we can renumber the edges in such a way that $a_1:=b_1$ and $a_2:=b_2$, then a short computation shows that
 % -------------- %
\begin{equation}\label{psi11}
\psi_1(a_1)=\psi_2(a_2)=1,\qquad \psi_1(a_2)=\psi_2(a_1)=0.
\end{equation}
 % -------------- %
If, on the other hand, $\phi$ is zero, we can find a vertex $b_3\in \{a_i\}_{i=1}^n$ at which the function  $\varphi_2(b_2)\varphi_1 -\varphi_1(b_2)\varphi_2$ does not vanish and set
 % -------------- %
\begin{eqnarray*}
\psi_2:=
\big(\varphi_2(b_2)\varphi_1-\varphi_1(b_2)\varphi_2\big)/
\big(\varphi_2(b_2)\varphi_1(b_3)-\varphi_1(b_2)\varphi_2(b_3)\big),
\\
\psi_1:=\big(\varphi_1-\varphi_1(b_3)\psi_2\big)/
\varphi_1(b_1);
\end{eqnarray*}
 % -------------- %
rearranging the edges in such a way that $a_1:=b_1$ and $a_2:=b_3$ we get again (\ref{psi11}). The process can be continued leading to the following conclusion:

 % -------------- %
\begin{lem}\label{lem:eigenfunction}
Suppose that the problem (\ref{Resonant}) has $m$ linearly independent solutions, then one can choose them as real-valued functions $\psi_1,\dots,\psi_m$ satisfying
 % -------------- %
\[
\psi_i(a_j)=\delta_{ij},\quad i,j=1,\dots,m,
\]
where $\delta_{ij}$ is the Kronecker symbol.
\end{lem}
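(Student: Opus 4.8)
The plan is to reformulate the construction in terms of a single evaluation map and reduce it to elementary linear algebra, thereby replacing the case distinction according to whether $\phi$ vanishes by a uniform rank argument. First I would dispose of the reality requirement: since $Q$ is real-valued, the real and imaginary parts of any solution of (\ref{Resonant}) are again solutions belonging to $K(\Omega)$, so the $m$-dimensional solution space, which I denote by $\mathcal{R}$, admits a real basis and we may work over $\R$ throughout.

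The key object is the \emph{evaluation map} $\mathrm{ev}\colon \mathcal{R}\to\R^n$ defined by $\mathrm{ev}(\psi)=(\psi(a_1),\dots,\psi(a_n))$. The decisive step, and the one I expect to be the main obstacle, is to prove that $\mathrm{ev}$ is injective. Suppose $\psi\in\mathcal{R}$ satisfies $\psi(a_i)=0$ for all $i$. Since $\psi\in K(\Omega)$, the Kirchhoff condition at the hanging vertex $a_i$ reduces to the Neumann condition $\frac{\D\psi}{\D\omega_i}(a_i)=0$; hence on each edge $\omega_i$ the function $\psi$ solves the linear second-order equation $-\psi''+Q\psi=0$ with vanishing Cauchy data at the endpoint $a_i$. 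By the uniqueness theorem for ordinary differential equations $\psi\equiv0$ on $\omega_i$, and therefore $\psi\equiv0$ on $\Omega$. This proves the injectivity of $\mathrm{ev}$ and, as a by-product, the bound $m\le n$.

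It follows that the image $W:=\mathrm{ev}(\mathcal{R})$ is an $m$-dimensional subspace of $\R^n$. A matrix whose columns form a basis of $W$ has rank $m$, so it possesses $m$ linearly independent rows; after renumbering the edges, which is exactly the freedom exploited in the text, we may assume these are the first $m$ rows, i.e. the projection $\pi\colon W\to\R^m$ onto the first $m$ coordinates is a linear isomorphism. For each $i=1,\dots,m$ let $w_i\in W$ be the unique vector with $\pi(w_i)=e_i$, the $i$-th canonical basis vector of $\R^m$, and put $\psi_i:=\mathrm{ev}^{-1}(w_i)$, which is well defined because $\mathrm{ev}\colon\mathcal{R}\to W$ is a bijection. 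Then $\psi_i(a_j)=\delta_{ij}$ for $i,j=1,\dots,m$, which is precisely (\ref{psi11}) and its higher-order analogue, and the $\psi_i$ are linearly independent since $\mathrm{ev}$ is injective, so they form the desired basis of $\mathcal{R}$.

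Finally I would remark that this abstract argument is merely the Gauss-elimination content of the explicit recursion sketched before the statement: the formulas defining $\psi_1,\psi_2$, together with the auxiliary vertex $b_3$ invoked when $\phi=0$, realise one coordinate at a time the selection of the $m$ independent rows of $W$, so that no separate treatment of the degenerate configurations is needed once the injectivity of $\mathrm{ev}$ has been secured.
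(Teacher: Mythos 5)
Your argument is correct in substance and reaches the same normalized basis as the paper, but it is organized differently. The paper's proof consists of the explicit recursion preceding the lemma --- a Gaussian-elimination-by-hand with a case distinction according to whether the $2\times2$ determinant $\phi$ vanishes, continued inductively with the details of the higher steps omitted --- plus the observation that reality follows because $S$ with the Kirchhoff conditions commutes with complex conjugation. You replace this by a uniform rank argument for the evaluation map $\mathrm{ev}\colon\mathcal{R}\to\R^n$, and in doing so you isolate and actually prove the one nontrivial analytic ingredient, namely that a resonant solution vanishing at all outer vertices $a_1,\dots,a_n$ is identically zero. The paper uses this fact only implicitly (it is what guarantees the existence of the vertices $b_1,b_2$ with $\varphi_1(b_1)\neq0$, $\varphi_2(b_2)\neq0$, and of $b_3$ in the degenerate case), so your version is arguably more complete; what it gives up is the explicit formulas for the $\psi_i$ in terms of an arbitrary starting basis, which the paper's recursion provides. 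The reality reduction via real and imaginary parts is equivalent to the paper's conjugation argument.

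One step needs repair. You prove injectivity of $\mathrm{ev}$ edge by edge: ``on each edge $\omega_i$ the function $\psi$ solves $-\psi''+Q\psi=0$ with vanishing Cauchy data at $a_i$, hence $\psi\equiv0$ on $\omega_i$.'' The paper explicitly allows $Q$ to be a nondiagonal matrix function in the edge indices, in which case the equation restricted to a single edge is \emph{not} a closed ODE for $\psi_{\omega_i}$ alone --- it involves the values of $\psi$ on the other edges --- so the scalar uniqueness theorem does not apply edgewise. The conclusion survives: since $\psi(a_i)=0$ and $\frac{\D\psi}{\D\omega_i}(a_i)=0$ for \emph{every} $i$, one applies the uniqueness theorem to the first-order system on $\Omega$ obtained from the full vector equation, integrating inward from the outer vertices (if the edge lengths differ, peel off the edges from the longest inward, using that the off-diagonal entries $Q_{ij}$ must vanish wherever the $j$-th component is no longer defined). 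This is a local fix, not a flaw in the strategy, but as written the injectivity proof is only valid for diagonal $Q$.
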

% -------------- %
\begin{proof}
It remains to check that the $\psi_i$'s can be chosen real which follows from the fact that the operator $S$ including the Kirchhoff conditions at the origin commutes with the complex conjugation.
\end{proof}

For notational convenience, we introduce the resonant and full index sets by
 % -------------- %
\[
\mathfrak{m}:=\{1,\dots,m\},\qquad
\mathfrak{n}:=\{1,\dots,n\},
\]
 % -------------- %
respectively, adopting the convention that $\mathfrak{m}$ is empty for $m=0$. To describe the outcome of the limiting process we need the following quantities:
 % -------------- %
\begin{eqnarray*}
\theta_{{i}{j}}:=\psi_{i}(a_{j}),\qquad {i}\in{\mathfrak{m}},\quad {j}\in\mathfrak{n}\setminus\mathfrak{m},
\\
q_{ij}:=\int_\Gamma Q\psi_i\psi_j\,\D \Gamma,\qquad i,j\in\mathfrak{m}.
\end{eqnarray*}
 % -------------- %
Using them, we define the {\it limit operator} $H$ as the one acting via
 % -------------- %
\[
H\phi:=-\phi''
\]
 % -------------- %
on functions $\phi\in{H}^2(\Gamma)$ that obey the matching conditions
\begin{eqnarray} %\end{align}
%\begin{aligned}
\phi_{\gamma_{j}}(a)-\sum_{{i}\in\mathfrak{m}}\theta_{{i}{j}}
\phi_{\gamma_{i}}(a)=0,
\quad {j}\in\mathfrak{n}\setminus\mathfrak{m}, \nonumber\\[-.5em] \label{limit:mc} \\[-.5em]
\frac{\mathrm{d}\phi}{\mathrm{d}\gamma_{i}}(a)+
\sum_{{j}\in\mathfrak{n}\setminus\mathfrak{m}}\theta_{{i}{j}}\frac{\mathrm{d}\phi}
{\mathrm{d}\gamma_{j}}(a)-
\lambda
\sum_{j\in\mathfrak{m}}{q}_{ij}\phi_{\gamma_j}(a)
=0,
\quad {i}\in\mathfrak{m}. \nonumber
%\end{aligned}
\end{eqnarray} %\end{align}
 % -------------- %

 % -------------- %
\begin{rems} \label{bc-rem}
{\rm (a) In the family of the operators with potentials $Q$ of the described class a zero-energy resonance is generically absent. This situation corresponds to Dirichlet decoupled edges, $\phi_{\gamma_j}(a)=0$.

\smallskip

\noindent (b) If $\lambda=0$ or $Q=0$, the conditions (\ref{limit:mc}) do not couple function values and derivatives, and as a result, the matching conditions of the limit operator are \emph{scale-invariant}. This means, in particular, that $H$ has no eigenvalues and $\sigma(H)=[0,\infty)$. Another manifestation of the scale-invariant character is that the scattering matrix, which we shall discuss below, is independent of energy.

\smallskip

\noindent (c) The matching conditions (\ref{limit:mc}) contain in general
$m(n-m)+\frac12 m(m+1)+1 = \frac12 m(2n-m+1)+1$ parameters since $q_{ij}=q_{ji}$. In the scale-invariant case the number is reduced to $m(n-m)$ and there is a natural duality with respect to interchange of function values and derivatives in (\ref{limit:mc}).

\smallskip

\noindent (d) On the other hand, if $\lambda$ and $q_{ij}$ are nonzero, the operator $H$ may have a discrete spectrum in $(-\infty,0)$. Since any such operator and the Dirichlet decoupled one have a common symmetric restriction with deficiency indices not exceeding $(n,n)$, it follows from general principles \cite[Sect.~8.3]{Wei80} that the number of such eigenvalues does not exceed $n$ counting multiplicities.
The actual number depends on the parameter values. For example, if $m=1$ and all the $\theta_{1j}$ and $q_{1j}$ are the same, the conditions (\ref{limit:mc}) are equivalent the usual $\delta$-coupling \cite{Ex96} which has one or no eigenvalue depending on the sign of $\lambda$.
}
\end{rems}
 % -------------- %

Our first main result says that the Schr\"{o}dinger operators $H_\varepsilon$ approach $H$ as $\varepsilon\to 0$ in the norm-resolvent topology with a particular convergence rate:

 % -------------- %
\begin{thm}\label{thm:operator}
$H_\varepsilon \to H$ holds as $\varepsilon\to0$ in the norm resolvent sense, and morever, for any fixed $\zeta\in\mathbb{C}\setminus\mathbb{R}$ there is a constant $C$ such that
 % -------------- %
\[
\|(H_\varepsilon-\zeta)^{-1}-(H-\zeta)^{-1}\|
_{\mathcal{B}({L}^2(\Gamma))}
\leq C\sqrt\varepsilon,\qquad \varepsilon\in(0,1].
\]
 % -------------- %
\end{thm}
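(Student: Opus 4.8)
The plan is to prove norm-resolvent convergence by comparing the resolvent of $H_\varepsilon$ with that of the limit operator $H$ on a function-by-function basis, using an explicit construction of the resolvent kernel adapted to the scaled potential. First I would fix $\zeta\in\mathbb{C}\setminus\mathbb{R}$ and, given $f\in L^2(\Gamma)$, write $u_\varepsilon:=(H_\varepsilon-\zeta)^{-1}f$ and $u:=(H-\zeta)^{-1}f$; the goal is the bound $\|u_\varepsilon-u\|_{L^2(\Gamma)}\le C\sqrt\varepsilon\,\|f\|_{L^2(\Gamma)}$. The natural strategy is to rescale near the vertex: on the shrinking graph $\Omega_\varepsilon$ one changes variables $x=\varepsilon y$, which turns the equation $-u_\varepsilon''+\varepsilon^{-2}\lambda(\varepsilon)Q(x/\varepsilon)u_\varepsilon=\zeta u_\varepsilon+f$ into $-\tilde u_\varepsilon''+\lambda(\varepsilon)Q(y)\tilde u_\varepsilon=\varepsilon^2(\zeta\tilde u_\varepsilon+\tilde f)$ on the fixed graph $\Omega$. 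Thus in the inner region the leading-order equation is precisely the resonance equation (\ref{Resonant}), and the resonant solutions $\psi_1,\dots,\psi_m$ from Lemma~\ref{lem:eigenfunction} are the correct basis for the slowly-varying part of $\tilde u_\varepsilon$.

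The heart of the argument is matched asymptotic expansion. On each outer edge, away from the vertex, $u_\varepsilon$ solves the free resolvent equation and is determined by its boundary data at $a$ together with decay at infinity; on the inner (rescaled) graph one expands $\tilde u_\varepsilon=\sum_{i\in\mathfrak m}c_i\psi_i+\varepsilon w_1+\OO(\varepsilon^2)$, where the coefficients $c_i$ are fixed by matching to the outer solution. The key computation is to extract the effective matching conditions: substituting the expansion into the rescaled equation and integrating against $\psi_j$ over $\Omega$, one uses the Kirchhoff condition together with integration by parts (a Green's-type identity, exploiting $-\psi_i''+Q\psi_i=0$) to convert boundary terms at the points $a_j$ into the quantities $\theta_{ij}=\psi_i(a_j)$ and $q_{ij}=\int_\Gamma Q\psi_i\psi_j$. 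This is exactly where the conditions (\ref{limit:mc}) emerge, and where the linear term $\varepsilon\lambda$ in $\lambda(\varepsilon)=1+\varepsilon\lambda+\OO(\varepsilon^2)$ contributes the term $\lambda\sum_j q_{ij}\phi_{\gamma_j}(a)$ at the correct order. The solvability of the corrector equation for $w_1$ requires a Fredholm (orthogonality) condition, and that condition is precisely the second line of (\ref{limit:mc}).

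To obtain the quantitative rate, I would build an explicit approximate resolvent by gluing the inner and outer constructions via a smooth cutoff supported near the vertex, producing a function $v_\varepsilon$ satisfying $(H_\varepsilon-\zeta)v_\varepsilon=f+r_\varepsilon$ with a residual $r_\varepsilon$ whose $L^2$-norm is controlled. Because the cutoff introduces commutator terms of size $\OO(\varepsilon)$ concentrated on an edge-interval of length $\OO(\varepsilon)$, and the mismatch of derivatives at the gluing point is itself $\OO(\varepsilon)$, one finds $\|r_\varepsilon\|_{L^2(\Gamma)}\le C\sqrt\varepsilon\,\|f\|_{L^2(\Gamma)}$; the $\sqrt\varepsilon$ rather than $\varepsilon$ arises from the $L^2$-measure of the thin transition region. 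Applying $(H_\varepsilon-\zeta)^{-1}$, whose norm is bounded by $|\mathrm{Im}\,\zeta|^{-1}$ uniformly in $\varepsilon$, then yields $\|u_\varepsilon-v_\varepsilon\|\le C\sqrt\varepsilon\|f\|$, and an analogous estimate identifies $v_\varepsilon$ with $u$ up to $\OO(\sqrt\varepsilon)$. Combining these gives the stated bound with the resolvent-difference operator norm $\le C\sqrt\varepsilon$.

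The main obstacle I anticipate is the careful bookkeeping in the matching step, specifically controlling the corrector $w_1$ uniformly: one must verify the Fredholm alternative for the singular resonance problem on $\Omega$ (the operator $-\D^2/\D y^2+Q$ with Kirchhoff conditions has an $m$-dimensional kernel spanned by the $\psi_i$), confirm that the orthogonality conditions forced by solvability coincide exactly with (\ref{limit:mc}), and track how the analyticity of $\lambda$ and the normalization $\lambda(0)=1$ feed into the error terms. A secondary technical difficulty is ensuring the estimates are uniform in $f\in L^2(\Gamma)$ rather than merely pointwise, which requires expressing the whole construction as a bounded operator and bounding the residual operator in $\mathcal B(L^2(\Gamma))$ rather than just evaluating on individual $f$.
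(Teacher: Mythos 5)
Your proposal follows essentially the same route as the paper: an inner/outer decomposition with the resonant solutions $\psi_1,\dots,\psi_m$ as the leading inner basis, an order-$\varepsilon$ corrector whose solvability via the Fredholm alternative on $\Omega$ produces exactly the matching conditions (\ref{limit:mc}), a smooth gluing whose residual is $\OO(\sqrt\varepsilon)\|f\|_{L^2(\Gamma)}$, and the uniform bound $\|(H_\varepsilon-\zeta)^{-1}\|\le|\mathrm{Im}\,\zeta|^{-1}$ to convert the residual estimate into the resolvent estimate. The paper implements the gluing by correcting the jumps of the approximant and its derivative at the points $a_i^\varepsilon$ with compactly supported bump functions rather than a cutoff in a transition region, but this is the same idea and yields the same $\sqrt\varepsilon$ rate.
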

 % -------------- %

\medskip

The second question to address concerns the scattering. We denote by $S_0$ the Schr\"{o}dinger operator describing a free particle moving on the graph $\Gamma$, i.e.
 % -------------- %
\[
S_0=-\frac{\mathrm{d}^2}{\mathrm{d}x^2},
\qquad \mathrm{dom}\,S_0={H}^2(\Gamma)
\cap{K}(\Gamma).
\]
 % -------------- %
We are interested in the asymptotic behavior of the scattering amplitudes with respect to this free dynamics. Our second main result is the following:
 % -------------- %
\begin{thm}\label{thm:Scat}
For any momentum $k>0$ the on-shell scattering matrix for the pair $(H_\varepsilon,S_0)$ converges as $\varepsilon\to0$ to that of $(H,S_0)$.
\end{thm}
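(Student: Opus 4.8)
The plan is to compute the on-shell scattering matrix of each operator directly from its generalized eigenfunctions and to show that the $\varepsilon$-family of these matrices converges entrywise to the one associated with $H$. Fix $k>0$ and, for $A\in\{H_\varepsilon,H\}$ and each incoming channel $\ell\in\mathfrak{n}$, let $\Phi^{(\ell)}$ be the solution of $A\Phi=k^2\Phi$ obeying the matching conditions of $A$ at the vertex whose restriction to the edge $\gamma_j$ has the asymptotics $\Phi^{(\ell)}_{\gamma_j}(x)=\delta_{j\ell}\,\mathrm{e}^{-\mathrm{i}kx}+[S_A(k)]_{j\ell}\,\mathrm{e}^{\mathrm{i}kx}$; the entries $[S_A(k)]_{j\ell}$ define the on-shell scattering matrix $S_A(k)$. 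First I would establish the explicit form of $S_H(k)$: since the matching conditions (\ref{limit:mc}) are a set of linear relations between the boundary values $\Phi_{\gamma_j}(a)$ and $\frac{\mathrm{d}\Phi}{\mathrm{d}\gamma_j}(a)$, writing $\Phi_{\gamma_j}(x)=A_j\mathrm{e}^{-\mathrm{i}kx}+B_j\mathrm{e}^{\mathrm{i}kx}$ turns them into a linear system for $(A_j,B_j)$, whose solution exhibits $S_H(k)$ as a M\"obius-type rational function of $k$ with coefficients built from the vertex data $\theta_{ij}$ and $\lambda q_{ij}$. In particular $S_H(k)$ is independent of $k$ exactly when $\lambda q_{ij}=0$, in accordance with Remark~\ref{bc-rem}(b).

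Next I would reduce $S_{H_\varepsilon}(k)$ to the same kind of vertex data. Because the potential of $H_\varepsilon$ is supported in the shrinking star graph $\Omega_\varepsilon$, on each edge beyond the point $\varepsilon a_j$ the scattering solution is again a plane wave $A_j\mathrm{e}^{-\mathrm{i}kx}+B_j\mathrm{e}^{\mathrm{i}kx}$; hence $S_{H_\varepsilon}(k)$ is determined by the interior boundary-value problem $-\Phi''+\varepsilon^{-2}\lambda(\varepsilon)Q(\cdot/\varepsilon)\Phi=k^2\Phi$ on $\Omega_\varepsilon$ with Kirchhoff conditions at $a$, together with the $C^1$-matching to the exterior plane waves at the points $\varepsilon a_j$. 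After the rescaling $x=\varepsilon y$ this interior problem becomes $-u''+\lambda(\varepsilon)Qu=\varepsilon^2k^2u$ on $\Omega$ with Kirchhoff conditions at $a$, i.e. a regular perturbation of the resonance problem (\ref{Resonant}) in the two small parameters $\lambda(\varepsilon)-1=\varepsilon\lambda+\OO(\varepsilon^2)$ and $\varepsilon^2k^2$.

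The heart of the argument is the asymptotic analysis of this interior problem. Taking the normalized resonant solutions $\psi_1,\dots,\psi_m$ of Lemma~\ref{lem:eigenfunction} as the leading order, I would expand the interior solution together with the associated interior Dirichlet-to-Neumann data in powers of $\varepsilon$: the zeroth order reproduces the resonant subspace and the relations $\theta_{ij}=\psi_i(a_j)$ governing the decoupling of the non-resonant channels, while the first correction, driven by the term $\varepsilon\lambda Q$, produces precisely the couplings $\lambda q_{ij}$ through the quadratic forms $q_{ij}=\int_\Gamma Q\psi_i\psi_j\,\mathrm{d}\Gamma$; the energy term $\varepsilon^2k^2$ enters only at higher order, which is the analytic reason why the limiting scattering matrix depends on $k$ solely through the free exterior. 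Feeding these interior data into the $C^1$-matching at $\varepsilon a_j$ and letting $\varepsilon\to0$ yields exactly the linear relations (\ref{limit:mc}) that define $S_H(k)$, whence $S_{H_\varepsilon}(k)\to S_H(k)$. Much of the needed uniform control is already contained in the proof of Theorem~\ref{thm:operator}, and I would reuse those estimates; indeed I expect the same rate $\OO(\sqrt\varepsilon)$ to emerge here.

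The main obstacle is to make the interior expansion uniform as $\varepsilon\to0$, in particular to keep the interior Dirichlet-to-Neumann map under control despite its near-degeneracy on the resonant subspace (where the map degenerates precisely at $\varepsilon=0$, this being the analytic manifestation of the zero-energy resonance) and to handle the matching at the moving boundary points $\varepsilon a_j$. A technically cleaner but essentially equivalent route, which directly exploits the already-established convergence, is to read $S_A(k)$ off the outgoing long-distance asymptotics of the boundary value $R_A(k^2+\mathrm{i}0)$ of the resolvent kernel on the continuous spectrum, and to upgrade Theorem~\ref{thm:operator} to a limiting-absorption statement with convergence of these boundary values in weighted $L^2$-spaces; the free structure of the exterior and the compact shrinking support of the potentials make this upgrade feasible, the crux then being a limiting-absorption estimate uniform in $\varepsilon$ near the positive energy $k^2$.
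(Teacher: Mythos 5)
Your proposal follows essentially the same route as the paper: both compute the on-shell amplitudes directly from the generalized eigenfunctions, reduce the $\varepsilon$-problem by rescaling to the fixed interior equation $-u''+\lambda(\varepsilon)Qu=\varepsilon^2k^2u$ on $\Omega$ with Kirchhoff conditions at $a$, and extract the limit from the first-order perturbation of the resonant data, where integrating that equation against $\psi_i$ yields precisely the correction $\varepsilon\lambda q_{ij}^\varepsilon+\OO(\varepsilon^2)$ responsible for the couplings $\lambda q_{ij}$ in (\ref{limit:mc}). The near-degeneracy on the resonant subspace that you flag as the main obstacle is handled in the paper not by a Dirichlet-to-Neumann map or a limiting-absorption upgrade but by writing $T_{ij}^\varepsilon$ and $T_{ij}$ as Cramer ratios of explicit determinants and proving, via row operations, that $\det\mathcal{A}^\varepsilon=\varepsilon^m\rho\,\det\mathcal{A}\,(1+o(1))$ and $\det\mathcal{A}^\varepsilon_{ij}=\varepsilon^m\rho\,\det\mathcal{A}_{ij}\,(1+o(1))$, so the degenerate factor $\varepsilon^m\rho$ cancels in the ratio.
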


%We define the \textit{limit operator} via the matching conditions at the graph vertex
%\begin{eqnarray*}
%y_{\gamma_n}(a)-\sum_{m\in\mathfrak{M}}\theta_{mn}y_{\gamma_m}(a)&=0,
%\qquad n\in\mathfrak{N},\\
%\frac{\mathrm{d}y}{\mathrm{d}\gamma_m}(a)+
%\sum_{n\in\mathfrak{N}}\theta_{mn}\frac{\mathrm{d}y}
%{\mathrm{d}\gamma_n}(a)&=0,
%\qquad m\in\mathfrak{M}.
%\end{eqnarray*}

Proofs of the above claims are the contents of the following two sections.
Since we will employ many constants, a comment on them is due. The uppercase $C_1\dots,C_{m+1}$ will appear in the statement of Lemma~\ref{lem:z_eps}, similarly uppercase $B_1,\dots,B_5$ will be used in Lemmata~\ref{lem:jump1}--\ref{lem:inequalities}. On the other hand, the lowercase $c_i$'s shall denote various positive numbers independent of $\varepsilon$ the values of which may be different in different proofs.

%%%%%%%%%%%%%%%%%%%%%%%%%%%%%%%%%%%%%%%%%%%%%%%%%%%%%%%%%%
%%  OPERATORS                                           %%
%%%%%%%%%%%%%%%%%%%%%%%%%%%%%%%%%%%%%%%%%%%%%%%%%%%%%%%%%%
\section{The operator convergence}
\setcounter{equation}{0}

This section is devoted to proof of Theorem~\ref{thm:operator}. We will do that by considering the functions $y_\varepsilon:=(H_\varepsilon -\zeta)^{-1}f$ and $y:=(H-\zeta)^{-1}f$ for a fixed $\zeta\in \mathbb{C}\setminus\mathbb{R}$ and any $f\in{L}^2(\Gamma)$ and demonstrating that
 % -------------- %
\[
\|y_\varepsilon-y\|_{{L}^2(\Gamma)}\leq C\sqrt\varepsilon\,\|f\|_{{L}^2(\Gamma)}
\]
 % -------------- %
holds with a constant $C$ independent of $\varepsilon$ and $f$. To this aim, we are going to construct a function $\tilde y_\varepsilon$
which will be a good approximation to both the $y_\varepsilon$ and $y$.
Since the differential expressions of $H_\varepsilon$ and $H$ coincide on $\Gamma\setminus\Omega_\varepsilon$, it is natural to identify $\tilde y_\varepsilon$ on $\Gamma\setminus\Omega_\varepsilon$ with $y$.
A more subtle construction is needed on $\Omega_\varepsilon$.
To define $\tilde y_\varepsilon$ on $\Omega_\varepsilon$ we have to employ the resonant solutions $\psi_1\dots,\psi_m$, i.e. the nontrivial solutions of the problem (\ref{Resonant}), in combination with a corrector function $z_\varepsilon$ being a solution of a particular nonhomogenous problem on $\Omega$, namely
 % -------------- %
\begin{eqnarray} %\end{align}
%\begin{aligned}\label{u_eps}
-z''+Qz=\varepsilon f(\varepsilon\cdot)-{\lambda}Q
\sum_{{i}\in\mathfrak{m}}y(a_{i}^\varepsilon)\psi_{i}
\quad\mathrm{on}\quad\Omega, \nonumber \\[-.7em]
\label{u_eps} \\[-.7em]
\frac{\mathrm{d}z}{\mathrm{d}\omega_{i}}(a_{i})=
\nu_{i},\quad {i}\in\frak{m},\qquad
\frac{\mathrm{d}z}{\mathrm{d}\omega_i}(a_i)=
y'(a_i^\varepsilon),
\quad i\in \mathfrak{n}\setminus\mathfrak{m}, \nonumber
%\end{aligned}
\end{eqnarray} %\end{align}
 % -------------- %
which obey the Kirchhoff conditions at the vertex $a$ and  $z_\varepsilon (a_{i})=0$ for ${i}\in\mathfrak{m}$. The problem (\ref{u_eps}) admits a solution if and only if
 % -------------- %
\begin{equation} %\end{align}
%\begin{aligned}
\label{kappa_m}
\hspace{-2em} \nu_{i}=-\Bigg[\sum_{{j}\in\mathfrak{n}\setminus \mathfrak{m}}y'(a_{j}^\varepsilon)\theta_{{i}{j}}
-{\lambda}\sum_{j\in\mathfrak{m}}y(a_j^\varepsilon) {q}_{ij}
+\varepsilon\int_\Omega f(\varepsilon t)\psi_{i}(t)\,\mathrm{d}\Omega\Bigg],
\quad {i}\in\mathfrak{m},
%\end{aligned}
\end{equation} %\end{align}
 % -------------- %
cf.~\cite[Thm~XI.4.1]{CodLev55} for details in the similar one-dimensional case; the value of $\nu_{i}$ is obtained by multiplying equation (\ref{u_eps}) by the function $\psi_{i}$ and integration by parts.
Using the variation-of-constants method, every solution of the nonhomogenous problem can be written as $\tilde z_\varepsilon+\sum_{{i}\in\mathfrak{m}} c_{i}\psi_{i}$, where $\tilde z_\varepsilon$ is a fixed solution and $c_{i}$ are uniquely determined coefficients, thus the corrector $z_\varepsilon$ is of the form $z_\varepsilon=\tilde z_\varepsilon-\sum_{{i}\in\mathfrak{m}} \tilde z_\varepsilon(a_{i}) \psi_{i}$. We can make the following claim:

 % -------------- %
\begin{lem}\label{lem:z_eps}
For any $f\in {L}^2(\Gamma)$ and $\varepsilon\in(0,1]$ we have the inequalities
 % -------------- %
\begin{eqnarray*}
\bigg|\nu_{i}-\frac{\mathrm{d}y}{\mathrm{d}\gamma_{i}}(a)\bigg|
\leq C_{i}\sqrt\varepsilon\,\|f\|_{{L}^2(\Gamma)},
\quad {i}\in\mathfrak{m},
\\ \quad\;\;\;
\|z_\varepsilon\|_{{H}^2(\Omega)}\leq C_{m+1}\|f\|_{{L}^2(\Gamma)}.
\end{eqnarray*}
 % -------------- %
\end{lem}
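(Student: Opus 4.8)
The plan is to reduce everything to the single a priori estimate $\|y\|_{H^2(\Gamma)}\le c\|f\|_{L^2(\Gamma)}$. This follows from $y=(H-\zeta)^{-1}f$ together with the identity $-y''=f+\zeta y$: one has $\|y\|_{L^2}\le|\mathrm{Im}\,\zeta|^{-1}\|f\|$ and hence $\|y''\|_{L^2}\le\|f\|+|\zeta|\,\|y\|\le c\|f\|$, while the Sobolev embedding on each half-line gives $\|y\|_\infty+\|y'\|_\infty\le c\|y\|_{H^2(\Gamma)}\le c\|f\|$. Since $y\in\mathrm{dom}\,H$, it also obeys the matching conditions (\ref{limit:mc}). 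These are the only properties of $y$ I shall use.

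For the first inequality I would compare $\nu_i$ from (\ref{kappa_m}) with the outward derivative $\frac{\mathrm{d}y}{\mathrm{d}\gamma_i}(a)=y'_{\gamma_i}(0)$. Solving the second line of (\ref{limit:mc}) for $\frac{\mathrm{d}y}{\mathrm{d}\gamma_i}(a)$ and subtracting it from (\ref{kappa_m}) cancels the limiting terms and leaves
\begin{eqnarray*}
\nu_i-\frac{\mathrm{d}y}{\mathrm{d}\gamma_i}(a)
&=&-\sum_{j\in\mathfrak{n}\setminus\mathfrak{m}}\theta_{ij}
\Big(y'(a_j^\varepsilon)-\frac{\mathrm{d}y}{\mathrm{d}\gamma_j}(a)\Big)
+\lambda\sum_{j\in\mathfrak{m}}q_{ij}\big(y(a_j^\varepsilon)-y_{\gamma_j}(a)\big) \\
&& {}-\varepsilon\int_\Omega f(\varepsilon t)\psi_i\,\mathrm{d}\Omega .
\end{eqnarray*}
Each contribution is then estimated separately. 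Writing $y'(a_j^\varepsilon)-y'(0)=\int_0^{a_j^\varepsilon}y''$ and using Cauchy--Schwarz gives $|y'(a_j^\varepsilon)-y'(0)|\le(a_j^\varepsilon)^{1/2}\|y''\|_{L^2}\le c\sqrt\varepsilon\,\|f\|$, because $a_j^\varepsilon=\OO(\varepsilon)$ as the homothety shrinks the support to the vertex. The value differences are even smaller, $|y(a_j^\varepsilon)-y(0)|\le a_j^\varepsilon\,\|y'\|_\infty\le c\varepsilon\|f\|$. Finally, since $\psi_i$ is bounded, the substitution $s=\varepsilon t$ yields $\big(\int_\Omega|f(\varepsilon t)|^2\big)^{1/2}=\varepsilon^{-1/2}\|f\|_{L^2(\Omega_\varepsilon)}\le\varepsilon^{-1/2}\|f\|$, so the integral term is bounded by $\varepsilon\|\psi_i\|_\infty|\Omega|^{1/2}\varepsilon^{-1/2}\|f\|=c\sqrt\varepsilon\,\|f\|$. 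The $\sqrt\varepsilon$ rate is dictated by the derivative differences and the integral term; collecting them yields the first inequality with $C_i$ depending on $\theta_{ij},q_{ij},\lambda,\|\psi_i\|_\infty$ and the $a_j$.

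For the bound on $z_\varepsilon$ I would regard (\ref{u_eps}) as an $\varepsilon$-independent boundary value problem on the fixed compact graph $\Omega$ whose data merely depend on $\varepsilon$. The right-hand side is controlled by the same estimates: $\|\varepsilon f(\varepsilon\cdot)\|_{L^2(\Omega)}=\sqrt\varepsilon\,\|f\|_{L^2(\Omega_\varepsilon)}\le\|f\|$ and $\big\|\lambda Q\sum_i y(a_i^\varepsilon)\psi_i\big\|_{L^2(\Omega)}\le c\|y\|_\infty\le c\|f\|$, while the Neumann data are bounded, $|\nu_i|\le c\|f\|$ (by the first inequality together with $|\frac{\mathrm{d}y}{\mathrm{d}\gamma_i}(a)|\le\|y'\|_\infty$) and $|y'(a_i^\varepsilon)|\le\|y'\|_\infty\le c\|f\|$. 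It then remains to invoke a uniform solvability estimate. The operator $-\frac{\mathrm{d}^2}{\mathrm{d}x^2}+Q$ on $\Omega$ with Kirchhoff and Neumann conditions has kernel spanned by the resonant solutions $\psi_1,\dots,\psi_m$; the solvability condition (\ref{kappa_m}) places the right-hand side in its range, and the normalization $z_\varepsilon(a_i)=0$, consistent thanks to $\psi_j(a_i)=\delta_{ij}$ from Lemma~\ref{lem:eigenfunction}, selects the unique solution transversal to that kernel. On this complement the solution operator is bounded, and --- the decisive point --- its norm is a fixed constant independent of $\varepsilon$, since the differential operator, the graph $\Omega$ and the boundary functionals do not involve $\varepsilon$. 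Hence $\|z_\varepsilon\|_{H^2(\Omega)}\le c\big(\|\varepsilon f(\varepsilon\cdot)\|_{L^2(\Omega)}+\sum_i(|\nu_i|+|y'(a_i^\varepsilon)|)+\|y\|_\infty\big)\le C_{m+1}\|f\|$.

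I expect the main obstacle to lie in the last step, namely in making rigorous the claim that, once the resonant kernel is removed by the normalization, the problem (\ref{u_eps}) is uniquely solvable with an $H^2$-bound depending only on $Q$ and $\Omega$. A clean route is to use the explicit representation $z_\varepsilon=\tilde z_\varepsilon-\sum_{i\in\mathfrak{m}}\tilde z_\varepsilon(a_i)\psi_i$ together with a variation-of-constants parametrix for the second-order system on each edge, and to check that the finite-dimensional matching system at $a$ and at the $a_i$ is nonsingular on the complement of the resonant subspace; its inverse then supplies the required $\varepsilon$-free constant.
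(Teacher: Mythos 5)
Your argument is correct. The first inequality is proved exactly as in the paper: the resolvent bound gives $\|y\|_{H^2(\Gamma)}\leq c\|f\|_{L^2(\Gamma)}$, the matching conditions (\ref{limit:mc}) are subtracted from (\ref{kappa_m}), and the three resulting differences are estimated by Cauchy--Schwarz on the shrinking intervals and by the change of variables in $\|f(\varepsilon\cdot)\|_{L^2(\Omega)}$; your $\OO(\varepsilon)$ bound on the value differences is even slightly sharper than the paper's $\OO(\sqrt\varepsilon)$, which is harmless. Where you genuinely diverge is the $H^2$ bound on $z_\varepsilon$. The paper proceeds constructively, edge by edge: initial value problems with data $(0,\nu_i)$ at $a_i$ on the resonant edges, then mixed two-point problems on the remaining edges after showing that at most one of the homogeneous problems $-z''+Qz=0$, $z(a)=0$, $z'(a_i)=0$ on $\omega_i$, $i\in\mathfrak{n}\setminus\mathfrak{m}$, can be degenerate (otherwise one could manufacture an extra solution of (\ref{Resonant})), and finally the exceptional edge handled as an initial value problem at $a$ via the Kirchhoff condition; each step uses an a priori ODE estimate with an $\varepsilon$-free constant. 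You instead treat (\ref{u_eps}) as a single Fredholm problem on the fixed graph $\Omega$: the kernel is spanned by $\psi_1,\dots,\psi_m$, the compatibility condition is (\ref{kappa_m}), uniqueness follows from $z(a_i)=0$ together with $\psi_j(a_i)=\delta_{ij}$ of Lemma~\ref{lem:eigenfunction}, and boundedness of the solution operator (closed graph theorem or your explicit matching-matrix parametrix) gives a constant that is manifestly independent of $\varepsilon$ since nothing in the problem except the data depends on $\varepsilon$. Both routes are sound; yours is shorter and avoids the paper's edge-ordering and non-degeneracy bookkeeping, while the paper's is fully explicit and supplies, as a by-product, the structural observation about degenerate edges that it reuses later in the Proposition characterizing resonances of order $m$.
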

 % -------------- %
\begin{proof}
First we observe that the resolvent $(H-\zeta)^{-1}$ is a bounded operator from ${L}^2(\Gamma)$ to the domain of $H$ equipped with the graph norm. Since the latter space is a subspace of ${H}^2(\Gamma)$, it follows that
 % -------------- %
\begin{equation}\label{est:y0}
\|y\|_{{H}^2(\Gamma)}\leq c_1\|f\|_{{L}^2(\Gamma)},
\end{equation}
 % -------------- %
and consequently,
 % -------------- %
\begin{equation}\label{est:y}
\|y\|_{{BC}^1(\Gamma)}\leq c_2\|f\|_{{L}^2(\Gamma)}
\end{equation}
 % -------------- %
in view of the fact that ${H}^2(\Gamma)\subset{BC}^1(\Gamma)$ by the Sobolev embedding theorem. Subtracting the relation
 % -------------- %
\[
\frac{\mathrm{d}y}{\mathrm{d}\gamma_{i}}(a)=
-\sum_{{j}\in\mathfrak{n}\setminus \mathfrak{m}}\theta_{{i}{j}}\frac{\mathrm{d}y}
{\mathrm{d}\gamma_{j}}(a)
+
{\lambda}
\sum_{j\in\mathfrak{m}}{q}_{ij}y_{\gamma_j}(a)
\]
 % -------------- %
which is a part of (\ref{limit:mc}) from formula~(\ref{kappa_m}) we arrive at
 % -------------- %
\begin{eqnarray} %\end{align}\label{est:kappam}
%\begin{aligned}
\lefteqn{\bigg|
\nu_{i}-\frac{\mathrm{d}y}{\mathrm{d}\gamma_{i}}(a)\bigg|
\leq\sum_{{j}\in\mathfrak{n}\setminus\mathfrak{m}}
|\theta_{{i}{j}}|
\bigg|y'(a_{j}^\varepsilon)-
\frac{\mathrm{d}y}{\mathrm{d}\gamma_{j}}(a)\bigg|} \nonumber
\\
\phantom{=}\,
+|{\lambda}|
\sum_{j\in\mathfrak{m}}|{q}_{ij}|
|y(a_j^\varepsilon)-y_{\gamma_j}(a)| \label{est:kappam}
\\
\phantom{=}\,
+\varepsilon \|f(\varepsilon\cdot)\|_{{L}^2(\Omega)}
\|\psi_{i}\|_{{L}^2(\Omega)}
\leq C_{i}
\sqrt\varepsilon\,\|f\|_{{L}^2(\Gamma)}, \nonumber
%\end{aligned}
\end{eqnarray} %\end{align}
 % -------------- %
proving thus the first inequality. We have used here the estimates
 % -------------- %
\begin{eqnarray} %\end{align}
%\begin{aligned}
\big|y(a_i^\varepsilon)-
y_{\gamma_i}(a)\big|
\leq
\int_a^{a_i^\varepsilon}|y'|\,\mathrm{d}\gamma_i
\leq c_3\sqrt\varepsilon\,\|y\|_{{H}^2(\Gamma)}\leq
c_4\sqrt\varepsilon\,\|f\|_{{L}^2(\Gamma)}, \nonumber
\\[-.5em] \label{est:1derivative} \\[-.5em]
\bigg|y'(a_i^\varepsilon)-
\frac{\mathrm{d}y}{\mathrm{d}\gamma_i}(a)\bigg|
\leq
\int_a^{a_i^\varepsilon}|y''|\,\mathrm{d}\gamma_i
\leq c_5\sqrt\varepsilon\,\|y\|_{{H}^2(\Gamma)}\leq
c_6\sqrt\varepsilon\,\|f\|_{{L}^2(\Gamma)}, \nonumber
%\end{aligned}
\end{eqnarray} %\end{align}
 % -------------- %
which hold for all $i\in\mathfrak{n}$ by virtue of the Cauchy-Bunyakovsky-Schwarz inequality and the fact that there is a constant $c_7$ such that
 % -------------- %
\begin{equation}\label{f_eps}
\|f(\varepsilon\cdot)\|_{{L}^2(\Omega)}\leq \frac{c_7}{\sqrt\varepsilon}\|f\|_{{L}^2(\Gamma)}.
\end{equation}
 % -------------- %
The restriction $z_{\varepsilon,\omega_{i}}$ of $z_\varepsilon$ to $\omega_{i}$ solves the initial value problem
 % -------------- %
\begin{eqnarray*}
-z''+Qz
=\varepsilon f(\varepsilon\cdot)-{\lambda}Q
\sum_{j\in\mathfrak{m}}y(a_j^\varepsilon)\psi_j
\quad\mathrm{on}\quad\omega_{i},\\
\phantom{-}z(a_{i})=0,\qquad
\frac{\mathrm{d}z}{\mathrm{d}\omega_{i}}(a_{i})=\nu_{i}
\end{eqnarray*}
 % -------------- %
for all ${i}\in\mathfrak{m}$. Using the standard a priori estimate for the solution of the initial value problem --- see, e.g., \cite[Prop~2.3]{Gol13} --- we infer from relations (\ref{est:y}), (\ref{est:kappam}), and (\ref{f_eps}) that
 % -------------- %
\begin{equation}\label{est:u_epsM}
\|z_\varepsilon\|_{{H}^2(\omega_{i})}\leq c_8\big(|\nu_{i}|+\|y\|_{{BC}(\Gamma)}+\varepsilon\|f(\varepsilon\cdot)\|
_{{L}^2(\Omega)}\big)
\leq c_9\|f\|_{{L}^2(\Gamma)}.
\end{equation}
 % -------------- %
Next we claim that at most one of the problems
 % -------------- %
\[
\hspace{-2em} -z''+Qz=0\quad\mathrm{on}\quad\omega_i,\qquad
z(a)=0,\qquad
\frac{\mathrm{d}z}{\mathrm{d}\omega_i}(a_i)=0,
\qquad
i\in\mathfrak{n}\setminus\mathfrak{m},
\]
 % -------------- %
has a nontrivial solution. We assume the opposite, namely that there are nonzero solutions of at least two of the problems. In such a case one would be able to construct in a straightforward manner a nontrivial solution of problem (\ref{Resonant}) vanishing on each edge $\omega_{i}$ for ${i}\in\mathfrak{m}$, which is however impossible, since this solution should be linearly independent with all resonant solutions $\psi_{i}, \,{i}\in\mathfrak{m}$.

Without loss of generality we may suppose that the above homogeneous problem on $\omega_i$ does not admit a nontrivial solution for $i\in\mathfrak{n}\setminus(\mathfrak{m}\cup\{n\})$, then for each $i\in\mathfrak{n}\setminus(\mathfrak{m}\cup\{n\})$ the corresponding nonhomogenous problem
 % -------------- %
\begin{eqnarray*}
-z''+z=\varepsilon f(\varepsilon\cdot)-{\lambda}Q
\sum_{j\in\mathfrak{m}}
y(a_j^\varepsilon)\psi_j\quad\mathrm{on}\quad\omega_i,
\\
\phantom{-}
z(a)=z_{\varepsilon,\omega_1(a)}, \qquad
\frac{\mathrm{d}z}{\mathrm{d}\omega_i}(a_i)=
y'(a_i^\varepsilon)
\end{eqnarray*}
 % -------------- %
has a unique solution, namely $z_{\varepsilon,\omega_i}$. Moreover,
 % -------------- %
\begin{equation}\label{est:u_epsN}
\hspace{-2em} \|z_\varepsilon\|_{{H}^2(\omega_i)}\leq c_{10}\big(|z_{\varepsilon,\omega_1(a)}|+
\|y\|_{{C}^1(\Gamma)}
+\varepsilon\|f(\varepsilon\cdot)\|_{{L}^2(\Omega)}\big)
\leq c_{11}\|f\|_{{L}^2(\Gamma)}
\end{equation}
 % -------------- %
by the a priori estimate for $z_{\varepsilon,\omega_i}$, (\ref{est:y}), (\ref{f_eps}), and (\ref{est:u_epsM}). Finally, the initial value problem
 % -------------- %
\begin{eqnarray*}
-z''+Qz=\varepsilon f(\varepsilon\cdot)-{\lambda}Q
\sum_{i\in\mathfrak{m}}
y(a_i^\varepsilon)\psi_i\quad\mathrm{on}\quad\omega_n,
\\
\phantom{-}
z(a)=z_{\varepsilon,\omega_1(a)}, \qquad
\frac{\mathrm{d}z}{\mathrm{d}\omega_n}(a)=
-\sum_{i\in \mathfrak{n}\setminus\{n\}}
\frac{\mathrm{d}z_\varepsilon}{\mathrm{d}\omega_i}(a)
\end{eqnarray*}
 % -------------- %
yields the function $z_{\varepsilon,\omega_n}$. By virtue of (\ref{f_eps}), (\ref{est:u_epsM}), and (\ref{est:u_epsN}) we find that
 % -------------- %
\begin{eqnarray*}
\|z_\varepsilon\|_{{H}^2(\omega_n)}\leq c_{12}\Bigg[|z_{\varepsilon,\omega_1(a)}|+
\sum_{i\in\mathfrak{n}\setminus\{n\}}
\bigg|\frac{\mathrm{d}z_\varepsilon}
{\mathrm{d}\omega_i}
(a)\bigg|
\\
\phantom{\leq}\,
+\|y\|_{{BC}(\Omega)}
+\varepsilon\|f(\varepsilon\cdot)\|_{{L}^2(\Omega)}\Bigg]
\leq c_{13}\|f\|_{{L}^2(\Gamma)};
\end{eqnarray*}
 % -------------- %
combining this estimate with (\ref{est:u_epsM}) and (\ref{est:u_epsN}), we arrive at the second inequality.
\end{proof}

As we have said we seek the approximation function in the form
 % -------------- %
\[
\varphi_\varepsilon:= \left\{\begin{array}{lll}
%\begin{cases}
y & \mathrm{on} & \Gamma\setminus\Omega_\varepsilon,\\[.5em]
\sum_{{i}\in\mathfrak{m}}y(a_{i}^\varepsilon)
\psi_{i}(\varepsilon^{-1}\cdot)+\varepsilon z_\varepsilon(\varepsilon^{-1}\cdot) & \mathrm{on} &
\Omega_\varepsilon. \end{array}\right.
%\end{cases}
\]
 % -------------- %
Although this function is indeed a good approximation to $y_\varepsilon$ and $y$ in some sense, it does not suit our purposes due to the fact that it is not smooth at the point $a_i^\varepsilon$. We are going to show, however, that the jumps of $\varphi_\varepsilon$ and $\varphi'_\varepsilon$ at these points are small which makes it possible to construct another ``small'' corrector $\psi_\varepsilon$ with the property that $\varphi_\varepsilon+\psi_\varepsilon$ is smooth and is still close to $y_\varepsilon$ and $y$.

To the claim about smallness more precise, we introduce the symbol $[g]_{a_i^\varepsilon}$ for the jump of the function $g$ at the point  $a_i^\varepsilon$. We have the following estimates:

 % -------------- %
\begin{lem}\label{lem:jump1}
There are numbers $B_1$ and $B_2$ such that for all $i\in\mathfrak{n}$ we have
 % -------------- %
\[
\big|[\varphi_\varepsilon]_{a_i^\varepsilon}\big|\leq
B_1\sqrt\varepsilon\,\|f\|_{{L}^2(\Gamma)},
\qquad
\big|[\varphi'_\varepsilon]_{a_i^\varepsilon}\big|\leq
B_2\sqrt\varepsilon\,\|f\|_{{L}^2(\Gamma)}.
\]
 % -------------- %
\end{lem}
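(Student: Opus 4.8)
The plan is to estimate the two jumps separately by writing out $\varphi_\varepsilon$ on both sides of each matching point $a_i^\varepsilon$ and exploiting the information already gathered about $y$ and the corrector $z_\varepsilon$. Recall that on $\Gamma\setminus\Omega_\varepsilon$ we have $\varphi_\varepsilon=y$, while on $\Omega_\varepsilon$ we have $\varphi_\varepsilon=\sum_{i\in\mathfrak{m}}y(a_i^\varepsilon)\psi_i(\varepsilon^{-1}\cdot)+\varepsilon z_\varepsilon(\varepsilon^{-1}\cdot)$. Thus the jump $[\varphi_\varepsilon]_{a_i^\varepsilon}$ is the difference between the exterior value $y(a_i^\varepsilon)$ and the interior limit of the sum as $\varepsilon^{-1}x\to a_i$. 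First I would evaluate the interior limit using the normalization $\psi_j(a_i)=\delta_{ij}$ from Lemma~\ref{lem:eigenfunction} together with the boundary conditions built into the corrector problem~(\ref{u_eps}), namely that $z_\varepsilon(a_i)=0$ for $i\in\mathfrak{m}$.

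For the function jumps I would split into the two index regimes. When $i\in\mathfrak{m}$, the normalization $\psi_j(a_i)=\delta_{ij}$ collapses the sum to $y(a_i^\varepsilon)$ and the condition $z_\varepsilon(a_i)=0$ kills the corrector term, so the interior and exterior values match \emph{exactly} and the jump vanishes; if anything survives it is controlled by the already-established $H^2(\Omega)$ bound on $z_\varepsilon$ from Lemma~\ref{lem:z_eps}. When $i\in\mathfrak{n}\setminus\mathfrak{m}$, the interior value is $\sum_{j\in\mathfrak{m}}y(a_j^\varepsilon)\theta_{ji}+\varepsilon z_\varepsilon(a_i)$, and I would compare this with $y(a_i^\varepsilon)$. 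Here the matching condition~(\ref{limit:mc}) for the \emph{limit} solution $y$ — which gives $y_{\gamma_i}(a)=\sum_{j\in\mathfrak{m}}\theta_{ji}y_{\gamma_j}(a)$ for these indices — is the key algebraic identity: it shows the leading terms cancel, leaving only boundary-layer discrepancies of the form $y(a_j^\varepsilon)-y_{\gamma_j}(a)$, which are $\OO(\sqrt\varepsilon\,\|f\|)$ by~(\ref{est:1derivative}), plus the $\varepsilon z_\varepsilon(a_i)$ term bounded via the Sobolev embedding $H^2(\Omega)\subset BC(\Omega)$ and the second inequality of Lemma~\ref{lem:z_eps}. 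Collecting these yields the first estimate.

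For the derivative jumps I would differentiate $\varphi_\varepsilon$, noting that on $\Omega_\varepsilon$ the chain rule introduces a factor $\varepsilon^{-1}$, so $\varphi'_\varepsilon=\varepsilon^{-1}\sum_i y(a_i^\varepsilon)\psi_i'(\varepsilon^{-1}\cdot)+z_\varepsilon'(\varepsilon^{-1}\cdot)$. At $a_i^\varepsilon$ I must compare $\tfrac{\mathrm{d}y}{\mathrm{d}\gamma_i}(a_i^\varepsilon)$ from outside with this interior derivative. The apparently dangerous $\varepsilon^{-1}$ prefactor is precisely where the resonant structure saves us: since each $\psi_i$ is constant (equal to its boundary value pattern) outside $\mathrm{supp}\,Q$, its derivative $\psi_i'(a_j)$ is the Neumann datum prescribed in~(\ref{u_eps}), and the corrector's derivative boundary conditions $\tfrac{\mathrm{d}z}{\mathrm{d}\omega_i}(a_i)=\nu_i$ for $i\in\mathfrak{m}$ and $=y'(a_i^\varepsilon)$ for $i\in\mathfrak{n}\setminus\mathfrak{m}$ were chosen exactly to cancel the singular contribution. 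The residual is then $|\nu_i-\tfrac{\mathrm{d}y}{\mathrm{d}\gamma_i}(a)|$, which is $\OO(\sqrt\varepsilon\,\|f\|)$ by the first inequality of Lemma~\ref{lem:z_eps}, together with boundary-layer terms bounded by~(\ref{est:1derivative}).

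The main obstacle I anticipate is bookkeeping the $\varepsilon^{-1}$ scaling in the derivative computation and verifying that the boundary conditions imposed on $z_\varepsilon$ in~(\ref{u_eps}) cancel the singular term cleanly rather than merely reducing its order; one must check that the prescribed Neumann data at the $a_i$ are consistent with the chain-rule factor and with the Kirchhoff condition at the central vertex $a$, so that no uncancelled $\OO(\varepsilon^{-1})$ piece remains. In short, the whole lemma hinges on the fact that the corrector problem~(\ref{u_eps}) and the limit matching conditions~(\ref{limit:mc}) were engineered so that the leading-order terms annihilate, and the surviving $\sqrt\varepsilon$ rate is inherited directly from the two Lemma~\ref{lem:z_eps} bounds and the one-dimensional Cauchy-Schwarz estimates~(\ref{est:1derivative}).
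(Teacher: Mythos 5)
Your proposal is correct and follows essentially the same route as the paper: write the interior/exterior values at $a_i^\varepsilon$, use $\psi_j(a_i)=\delta_{ij}$ and the corrector's boundary data to see that the function jump vanishes for $i\in\mathfrak{m}$ and the derivative jump vanishes for $i\in\mathfrak{n}\setminus\mathfrak{m}$, then control the remaining cases via the matching conditions (\ref{limit:mc}), the boundary-layer estimates (\ref{est:1derivative}), and the two bounds of Lemma~\ref{lem:z_eps}. Your handling of the $\varepsilon^{-1}$ prefactor is the right idea, though the cancellation comes from the Neumann conditions $\frac{\mathrm{d}\psi_j}{\mathrm{d}\omega_i}(a_i)=0$ built into the resonance problem (\ref{Resonant}) rather than from the Neumann data prescribed for $z_\varepsilon$ in (\ref{u_eps}).
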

 % -------------- %
\begin{proof}
Using the above expression for $\varphi_\varepsilon$ one finds easily
 % -------------- %
\begin{eqnarray*}
[\varphi_\varepsilon]_{a_i^\varepsilon}=y(a_i^\varepsilon)-
\Bigg[\sum_{j\in\mathfrak{m}}y(a_j^\varepsilon)
\psi_j(a_i)+\varepsilon z_\varepsilon(a_i)\Bigg],
\\[0em]
[\varphi'_\varepsilon]_{a_i^\varepsilon}=
y'(a_i^\varepsilon)-
\frac{\mathrm{d}z_\varepsilon}
{\mathrm{d}\omega_i}(a_i).
\end{eqnarray*}
 % -------------- %
A straightforward calculation shows that
$[\varphi_\varepsilon]_{a_{i}^\varepsilon}=0$ holds for ${i}\in\mathfrak{m}$ and that $[\varphi'_\varepsilon]_{a_i^\varepsilon}=0$ holds for $i\in\mathfrak{n}\setminus\mathfrak{m}$. Using Lemma~\ref{lem:z_eps}, (\ref{limit:mc}), and (\ref{est:1derivative}) we derive the following estimates for the jumps of the function,
 % -------------- %
\begin{eqnarray*}
|[\varphi_\varepsilon]_{a_{j}^\varepsilon}|\leq
|y(a_{j}^\varepsilon)-y_{\gamma_{j}}(a)|
+
\sum_{{i}\in\mathfrak{m}}|\theta_{{i}{j}}|
|y(a_{i}^\varepsilon)-y_{\gamma_{i}}(a)|
\\
\phantom{-}\,
+\varepsilon \|z_\varepsilon\|_{{H}^2(\Omega)}\leq B_1\sqrt\varepsilon\,\|f\|_{{L}^2(\Gamma)},
\quad {j}\in\mathfrak{n}\setminus\mathfrak{m},
\end{eqnarray*}
 % -------------- %
and its derivative,
 % -------------- %
\begin{eqnarray*}
\big|[\varphi'_\varepsilon]_{a_{i}^\varepsilon}\big|\leq
\bigg|y'(a_{i}^\varepsilon)-\frac{\mathrm{d}y}{\mathrm{d}
\gamma_{i}}(a)
\bigg|+
\bigg|\nu_{i}-\frac{\mathrm{d}y}{\mathrm{d}\gamma_{i}}(a)\bigg|
\leq B_2\sqrt\varepsilon\,\|f\|_{{L}^2(\Gamma)},\quad
{i}\in\mathfrak{m},
\end{eqnarray*}
 % -------------- %
which completes the proof of the lemma.
\end{proof}
 % -------------- %

 % -------------- %
\begin{lem}\label{lem:jump2}
Suppose that $\varphi\in{H}^2
(\Gamma\setminus\{a_i\}_{i\in\mathfrak{n}})$.
Then there exists a function $\psi$ with the following properties:
 % -------------- %
\begin{itemize}
  \item [(i)] $\;\psi$ belongs to ${C}^\infty
(\Gamma\setminus\Omega)$
and vanishes on $\Omega$;
  \item [(ii)] $\;\varphi+\psi$ is in ${H}^2(\Gamma)$;
\item[(iii)] there is a $B_3$ such that $\psi$ together with its derivatives satisfy the estimates
 % -------------- %
\[
\max\limits_{x\in\Gamma\setminus\Omega}|\psi^{({j})}(x)|
\leq B_3\sum_{i\in\mathfrak{n}}
\big(\big|[\varphi]_{a_i}\big|
+\big|[\varphi']_{a_i}\big|\big),\qquad j=0,1,2.
\]
 % -------------- %
\end{itemize}
 % -------------- %
\end{lem}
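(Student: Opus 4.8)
The plan is to remove the jumps of $\varphi$ at the points $a_i$ one edge at a time, using a single fixed pair of template functions whose amplitudes are scaled by the jump sizes. Since $\psi$ must vanish on $\Omega$, it can only live on the outer rays $\gamma_i\setminus\omega_i$, that is on $\{x>a_i\}$ in the arc-length parametrization of $\gamma_i$. The point $a_i$ is an interior point of the edge $\gamma_i$ (a vertex of $\Omega$ but not of $\Gamma$), so the one-sided boundary values $\psi(a_i^+)$ and $\psi'(a_i^+)$ coming from the ray are at our free disposal, while the values from the $\Omega$-side are forced to be zero. The key structural observation is that membership in $H^2(\Gamma)$ imposes nothing at the genuine vertex $a$ --- by the definition of $H^2(\Gamma)$ recalled in Section~2 no matching is required there --- so the only conditions to arrange are continuity of $\varphi+\psi$ and of its first derivative across each $a_i$.

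Concretely, I would fix a cutoff $\chi\in C^\infty([0,\infty))$ with $\chi\equiv1$ near $0$ and $\mathrm{supp}\,\chi\subset[0,2]$, and set $g:=\chi$ and $h(t):=t\chi(t)$, so that $g(0)=1,\ g'(0)=0$ and $h(0)=0,\ h'(0)=1$. I then define $\psi:=0$ on $\Omega$ and, on each ray,
\[
\psi(x):=-[\varphi]_{a_i}\,g(x-a_i)-[\varphi']_{a_i}\,h(x-a_i),\qquad x>a_i.
\]
By construction $\psi\in C^\infty(\Gamma\setminus\Omega)$, each $\psi|_{\gamma_i}$ is compactly supported within the ray, and $\psi$ vanishes on $\Omega$, which gives (i). The one-sided limits $\psi(a_i^+)=-[\varphi]_{a_i}$ and $\psi'(a_i^+)=-[\varphi']_{a_i}$ then cancel exactly the jumps of $\varphi$, with the convention that $[\,\cdot\,]_{a_i}$ denotes the outer value minus the inner one, consistently with the proof of Lemma~\ref{lem:jump1}.

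For (ii) I would note that $\varphi+\psi$ is $H^2$ on $[0,a_i]$ and on $[a_i,\infty)$ separately and, by the cancellation just recorded, is continuous together with its first derivative at $a_i$; the elementary fact that a function which is $H^2$ on two abutting intervals and $C^1$ at their common endpoint is $H^2$ across it then yields $\varphi+\psi\in H^2(\gamma_i)$ for every $i$, hence $\varphi+\psi\in H^2(\Gamma)$. For (iii), since $g$ and $h$ and their derivatives up to order two are bounded by constants independent of $\varphi$, differentiating the displayed formula gives $|\psi^{(j)}(x)|\le(\|g^{(j)}\|_\infty+\|h^{(j)}\|_\infty)(|[\varphi]_{a_i}|+|[\varphi']_{a_i}|)$ on the $i$-th ray for $j=0,1,2$; taking the maximum over $x$ and over $i$ furnishes the estimate with $B_3:=\max_{0\le j\le2}\max(\|g^{(j)}\|_\infty,\|h^{(j)}\|_\infty)$.

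There is no deep difficulty here: the whole argument is a localized gluing construction, and the only points demanding care are organizational. One must fix the orientation of the jumps so that the signs in the ansatz are correct, check that $\psi\equiv0$ near the vertex $a$ is harmless precisely because $H^2(\Gamma)$ imposes no vertex matching, and --- the point that really matters for the later use of the lemma --- choose the template functions $g,h$ once and for all, so that $B_3$ is a genuine constant independent of $\varphi$ and hence, in the application to $\varphi=\varphi_\varepsilon$, independent of $\varepsilon$.
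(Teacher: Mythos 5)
Your construction is essentially identical to the paper's: the authors likewise take smooth compactly supported templates $g_i,h_i$ on each outer ray with $g_i(a_i)=h_i'(a_i)=1$, $h_i(a_i)=g_i'(a_i)=0$, and set $\psi:=-\sum_{i\in\mathfrak{n}}\big([\varphi]_{a_i}\tilde g_i+[\varphi']_{a_i}\tilde h_i\big)$, leaving the verification of (i)--(iii) as a ``straightforward computation.'' Your write-up simply supplies the details the paper omits (explicit templates, the $C^1$-gluing argument for membership in $H^2(\Gamma)$, and the explicit constant $B_3$), and your sign and orientation conventions are consistent with the jump convention used in Lemma~\ref{lem:jump1}.
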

 % -------------- %
\begin{proof}
On every edge $\gamma_i$ we consider infinitely differentiable and compactly supported functions $g_i$ and $h_i$ with the properties that $g_i(a)=\frac{\mathrm{d}h_i}{\mathrm{d}\gamma_i}(a)=1$ and
$h_i(a)=\frac{\mathrm{d}g_i}{\mathrm{d}\gamma_i}(a)=0$, $i\in\mathfrak{n}$. Next denote by $\tilde g_i$ and $\tilde h_i$ their translations $g_i(x-a_i)$ and $h_i(x-a_i)$ extended by zero to the whole graph $\Gamma$; then we define the function $\psi$ as follows:
 % -------------- %
\[
\psi:=-
\sum_{i\in\mathfrak{n}}
\big(
[\varphi]_{a_i}\tilde g_i+
[\varphi']_{a_i}\tilde h_i
\big).
\]
 % -------------- %
One can check by a straightforward computation that this function satisfies all the required properties, which completes the proof.
\end{proof}

Combining Lemmata~\ref{lem:jump1} and \ref{lem:jump2}, we conclude that there exists a function $\psi_\varepsilon\in{C}^\infty(\Gamma\setminus \Omega_\varepsilon)$ which vanishes on $\Omega_\varepsilon$ and satisfies the inequalities
 % -------------- %
\begin{equation}\label{jumps:psi_eps}
\max\limits_{x\in\Gamma\setminus\Omega_\varepsilon}
|\psi_\varepsilon^{(i)}(x)|\leq c_1\sqrt\varepsilon\,
\|f\|_{{L}^2(\Gamma)}
\end{equation}
 % -------------- %
for $i=0,1,2$, and moreover, that the sum
 % -------------- %
\[
\tilde y_\varepsilon:=
\varphi_\varepsilon+\psi_\varepsilon
\]
 % -------------- %
belongs to $\mathrm{dom}\,H_\varepsilon={H}^2(\Gamma)\cap {K}(\Gamma)$.

Now we are in position to justify the choice of $\tilde y_\varepsilon$.

 % -------------- %
\begin{lem}\label{lem:inequalities}
There are numbers $B_4$ and $B_5$ such that the following inequalities hold:
 % -------------- %
\begin{eqnarray*}
\|\tilde y_\varepsilon-y_\varepsilon\|_{{L}^2(\Gamma)}
\leq B_4\sqrt\varepsilon\,\|f\|_{{L}^2(\Gamma)},
\\
\|\tilde y_\varepsilon-y\|_{{L}^2(\Gamma)}
\leq B_5\sqrt\varepsilon\,\|f\|_{{L}^2(\Gamma)}.
\end{eqnarray*}
 % -------------- %
\end{lem}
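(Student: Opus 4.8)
The plan is to establish the two bounds separately, in each case splitting the graph as $\Gamma=(\Gamma\setminus\Omega_\varepsilon)\cup\Omega_\varepsilon$ and using that $\tilde y_\varepsilon$ was engineered to be an approximate resolvent solution for $H_\varepsilon$ while staying $L^2$-close to $y$. For the first inequality I would exploit that $\tilde y_\varepsilon\in\mathrm{dom}\,H_\varepsilon$, so that the resolvent identity gives
\[
\tilde y_\varepsilon-y_\varepsilon=(H_\varepsilon-\zeta)^{-1}\bigl[(H_\varepsilon-\zeta)\tilde y_\varepsilon-f\bigr],
\]
and since $H_\varepsilon$ is self-adjoint one has $\|(H_\varepsilon-\zeta)^{-1}\|\le|\mathrm{Im}\,\zeta|^{-1}$ uniformly in $\varepsilon$. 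The matter thus reduces to showing that the defect $(H_\varepsilon-\zeta)\tilde y_\varepsilon-f$ is $O(\sqrt\varepsilon\,\|f\|_{L^2(\Gamma)})$ in $L^2(\Gamma)$, which I would estimate region by region.

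On $\Gamma\setminus\Omega_\varepsilon$ the potential vanishes and $\tilde y_\varepsilon=y+\psi_\varepsilon$, while $-y''-\zeta y=f$ holds a.e. because $y\in\mathrm{dom}\,H$; hence the defect equals $-\psi_\varepsilon''-\zeta\psi_\varepsilon$. As $\psi_\varepsilon$ is supported on a set of fixed, $\varepsilon$-independent finite measure and obeys, together with its derivatives, the pointwise bounds (\ref{jumps:psi_eps}), this contribution is at once $O(\sqrt\varepsilon\,\|f\|_{L^2(\Gamma)})$ in $L^2$.

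The essential part is the estimate on $\Omega_\varepsilon$, where $\psi_\varepsilon$ vanishes and $\tilde y_\varepsilon=\varphi_\varepsilon$. Here I would pass to the stretched variable $t=\varepsilon^{-1}x\in\Omega$, apply $H_\varepsilon$ to $\varphi_\varepsilon$ by the chain rule, and substitute the two identities defining the constituents of $\varphi_\varepsilon$: the resonance relation $\psi_i''=Q\psi_i$ coming from (\ref{Resonant}) and the corrector equation (\ref{u_eps}) for $z_\varepsilon$. The crux of the whole proof — and the step I expect to be the main obstacle — is that the formally singular contributions of orders $\varepsilon^{-2}$ and $\varepsilon^{-1}$ must cancel. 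The term proportional to $Q\sum_i y(a_i^\varepsilon)\psi_i$ acquires the coefficient $\varepsilon^{-2}(\lambda(\varepsilon)-1)-\lambda\varepsilon^{-1}=\varepsilon^{-2}\bigl(\lambda(\varepsilon)-1-\varepsilon\lambda\bigr)$, which is $O(1)$ precisely because of the expansion $\lambda(\varepsilon)=1+\varepsilon\lambda+\OO(\varepsilon^2)$; the source $\varepsilon f(\varepsilon\,\cdot)$ in (\ref{u_eps}) reproduces $f$ exactly on $\Omega_\varepsilon$ and cancels the $-f$ in the defect; and the $z_\varepsilon$-contribution carries the bounded coefficient $\varepsilon^{-1}(\lambda(\varepsilon)-1)=\lambda+\OO(\varepsilon)$. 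What survives is a sum of terms of the form (bounded coefficient)$\,\cdot Q\psi_i$, (bounded)$\,\cdot Qz_\varepsilon$, and $-\zeta\varphi_\varepsilon$, each bounded in $L^2(\Omega)$ by $c\,\|f\|_{L^2(\Gamma)}$ on account of (\ref{est:y}) and the $H^2(\Omega)$-bound for $z_\varepsilon$ from Lemma~\ref{lem:z_eps}. The Jacobian $\sqrt\varepsilon$ produced by the change of variables then converts this $L^2(\Omega)$-bound into the desired $O(\sqrt\varepsilon\,\|f\|_{L^2(\Gamma)})$ estimate on $L^2(\Omega_\varepsilon)$. Collecting both regions yields the first inequality with $B_4=c\,|\mathrm{Im}\,\zeta|^{-1}$.

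For the second inequality I would argue directly, without the resolvent. On $\Gamma\setminus\Omega_\varepsilon$ one has $\tilde y_\varepsilon-y=\psi_\varepsilon$, which is $O(\sqrt\varepsilon\,\|f\|_{L^2(\Gamma)})$ by (\ref{jumps:psi_eps}) exactly as before. On $\Omega_\varepsilon$ one has $\tilde y_\varepsilon-y=\varphi_\varepsilon-y$, and here I would use that $\Omega_\varepsilon$ has measure of order $\varepsilon$: the term $\|y\|_{L^2(\Omega_\varepsilon)}$ is bounded by $\sqrt{|\Omega_\varepsilon|}\,\|y\|_{BC(\Gamma)}\le c\sqrt\varepsilon\,\|f\|_{L^2(\Gamma)}$ through (\ref{est:y}), while the substitution $x=\varepsilon t$ together with the boundedness of $\sum_i y(a_i^\varepsilon)\psi_i+\varepsilon z_\varepsilon$ in $L^2(\Omega)$ — again by (\ref{est:y}) and Lemma~\ref{lem:z_eps} — gives $\|\varphi_\varepsilon\|_{L^2(\Omega_\varepsilon)}\le c\sqrt\varepsilon\,\|f\|_{L^2(\Gamma)}$. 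Summing the two regions produces the second inequality. The only genuinely delicate ingredient in the argument is the cancellation of the singular powers of $\varepsilon$ on $\Omega_\varepsilon$; everything else is bookkeeping resting on Lemma~\ref{lem:z_eps} and the Sobolev bound (\ref{est:y}).
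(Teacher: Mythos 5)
Your proposal is correct and follows essentially the same route as the paper: the first bound via the identity $\tilde y_\varepsilon-y_\varepsilon=(H_\varepsilon-\zeta)^{-1}\bigl[(H_\varepsilon-\zeta)\tilde y_\varepsilon-f\bigr]$ with $\|(H_\varepsilon-\zeta)^{-1}\|\le|\Im\zeta|^{-1}$, the defect computed region by region with the $\varepsilon^{-2}$ and $\varepsilon^{-1}$ singular terms cancelled exactly by the resonance equation (\ref{Resonant}) and the corrector equation (\ref{u_eps}) (leaving precisely the paper's remainder $I_3$ with coefficients $\varepsilon^{-2}(\lambda(\varepsilon)-1-\varepsilon\lambda)$ and $\varepsilon^{-1}(\lambda(\varepsilon)-1)$), and the second bound by a direct estimate using the $O(\varepsilon)$ measure of $\Omega_\varepsilon$ together with (\ref{est:y}), (\ref{jumps:psi_eps}) and Lemma~\ref{lem:z_eps}. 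The bookkeeping of where the Jacobian factor $\sqrt\varepsilon$ enters matches the paper's estimates of $\|r_\varepsilon\|_{L^2(\Gamma)}$ and $\|\tilde y_\varepsilon-y\|_{L^2(\Gamma)}$.
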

 % -------------- %
\begin{proof}
From the definition of $\tilde y_\varepsilon$, we compute
 % -------------- %
\[
(H_\varepsilon-\zeta)\tilde y_\varepsilon=
f-\psi_\varepsilon''-\zeta\psi_\varepsilon
\]
 % -------------- %
on $\Gamma\setminus\Omega_\varepsilon$, while on the scaled star graph $\Omega_\varepsilon$ we have
 % -------------- %
\begin{eqnarray*}
(H_\varepsilon-\zeta)\tilde y_\varepsilon
=
\varepsilon^{-2}I_1(\varepsilon^{-1}x)
+
\varepsilon^{-1}
I_2(\varepsilon^{-1}x)
+I_3(x).
\end{eqnarray*}
 % -------------- %
Here we have introduced the symbols
\begin{eqnarray*}
I_1(x):=
\sum_{{i}\in\mathfrak{m}}y(a_{i}^\varepsilon)
\big(
-\psi_{i}''+Q(x)\psi_{i}
\big),
\\
I_2(x):=
-z_\varepsilon''+Q(x)z_\varepsilon
+\lambda Q(x)
\sum_{{i}\in\mathfrak{m}}y(a_{i}^\varepsilon)\psi_{i},\\
I_3(x):=
\big[(\lambda(\varepsilon)-1-\varepsilon\lambda)\varepsilon^{-2}
Q(\varepsilon^{-1}x)-
\zeta\big]
\sum_{{i}\in\mathfrak{m}}y(a_{i}^\varepsilon)
\psi_{i}(\varepsilon^{-1}x)
\\
\phantom{:=}\,
+
\big[(\lambda(\varepsilon)-1)\varepsilon^{-1} Q(\varepsilon^{-1}x)-\varepsilon\zeta\big]
z_\varepsilon(\varepsilon^{-1}x).
\end{eqnarray*}
 % -------------- %
Since the functions $\psi_i$ and $z_\varepsilon$ solve equations (\ref{Resonant}) and (\ref{u_eps}), respectively, it follows that $I_1(x)=0$ and $I_2(x)=\varepsilon f(\varepsilon x)$. We are thus able to
conclude that
 % -------------- %
\[
(H_\varepsilon-\zeta)\tilde y_\varepsilon=f+r_\varepsilon,
\]
 % -------------- %
where
 % -------------- %
\[
r_\varepsilon= \left\{\begin{array}{lll}
%\begin{cases}
-\psi_\varepsilon''-\zeta\psi_\varepsilon &
\mathrm{on} & \Gamma\setminus\Omega_\varepsilon,\\
\phantom{-} I_3
& \mathrm{on} & \Omega_\varepsilon. \end{array} \right.
%\end{cases}
\]
 % -------------- %
Hence the formula
 % -------------- %
\[
\tilde y_\varepsilon-y_\varepsilon=\tilde y_\varepsilon-(H_\varepsilon-\zeta)^{-1}f=
(H_\varepsilon-\zeta)^{-1}r_\varepsilon
\]
 % -------------- %
yields the following inequality,
 % -------------- %
\[
\|\tilde y_\varepsilon-y_\varepsilon\|_{{L}^2(\Gamma)}
\leq
\|(H_\varepsilon-\zeta)^{-1}\|_{\mathcal{B}({L}^2(\Gamma))}
\|r_\varepsilon\|_{{L}^2(\Gamma)}
\leq
|\Im\zeta|^{-1}\|r_\varepsilon\|_{{L}^2(\Gamma)}
,
\]
 % -------------- %
which in view of Lemma~\ref{lem:z_eps} and (\ref{jumps:psi_eps}) gives the required estimate, since
 % -------------- %
\begin{eqnarray*}
\|r_\varepsilon\|_{{L}^2(\Gamma)}
\leq c_1 \max\limits_{x\in\Gamma\setminus\Omega_\varepsilon}
(|\psi_\varepsilon(x)|+|\psi_\varepsilon''(x)|)+
c_2
\|z_\varepsilon(\varepsilon^{-1}\cdot)\|_{{L}^2(\Omega_\varepsilon)}
\\
\phantom{=}\,
+c_3
\|y\|_{BC(\Gamma)}\sum_{{i}\in\mathfrak{m}}
\|\psi_{i}(\varepsilon^{-1}\cdot)\|_{{L}^2(\Omega_\varepsilon)}
\leq c_4\sqrt\varepsilon\, \|f\|_{{L}^2(\Gamma)}.
\end{eqnarray*}
 % -------------- %
In a similar way we find that
 % -------------- %
\begin{eqnarray*}
\|\tilde y_\varepsilon-y\|_{{L}^2(\Gamma)}
\leq c_5 \max\limits_{x\in\Gamma\setminus\Omega_\varepsilon}
|\psi_\varepsilon(x)|+
c_6\|y\|_{BC(\Gamma)}
\sum_{{i}\in\mathfrak{m}}
\|\psi_{i}(\varepsilon^{-1}\cdot)\|_{{L}^2(\Omega_\varepsilon)}
\\
\phantom{-}\,
+c_7\varepsilon\,
\|z_\varepsilon(\varepsilon^{-1}\cdot)\|_{{L}^2(\Omega_\varepsilon)}
+c_8\sqrt\varepsilon\|y\|_{BC(\Gamma)}
\leq c_9\sqrt\varepsilon\, \|f\|_{{L}^2(\Gamma)},
\end{eqnarray*}
 % -------------- %
which yields the second one of the sought inequalities, thus the lemma is proved.
\end{proof}

\noindent
{\it Proof of Theorem~\ref{thm:operator}}
follows in a straightforward manner from Lemma~\ref{lem:inequalities}. Indeed,
 % -------------- %
\begin{eqnarray*}
\|(H_\varepsilon-\zeta)^{-1}f-(H-\zeta)^{-1}f\|
_{{L}^2(\Gamma)}
\leq
\|\tilde y_\varepsilon-y_\varepsilon\|_{{L}^2(\Gamma)}\\
\phantom{=}\,
+
\|\tilde y_\varepsilon-y\|_{{L}^2(\Gamma)}\leq
C\sqrt\varepsilon\,\|f\|_{{L}^2(\Gamma)}.
\end{eqnarray*}
 % -------------- %

\medskip

To conclude this section let us comment on existence of nontrivial limits. We have pointed out that the presence of a zero-energy resonance is a non-generic situation. We can nevertheless provide a constructive algorithm which allows to achieve presence of
a zero-energy resonance of order $m>1$ by adjusting the potential.

 % -------------- %
\begin{prop}
The problem (\ref{Resonant}) admits $m>1$ nontrivial solutions if and only if $m+1$ problems among the following ones,
 % -------------- %
\[
-z''+Qz=0\quad\mathrm{on}\quad\omega_i,\qquad
z(a)=0,\qquad
\frac{\mathrm{d}z}{\mathrm{d}\omega_i}(a_i)=0,
\quad
i\in\mathfrak{n},
\]
 % -------------- %
have nontrivial solutions.
\end{prop}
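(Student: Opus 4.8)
The plan is to reduce the boundary value problem (\ref{Resonant}) on the star graph $\Omega$ to a finite linear system by working edge by edge. On each $\omega_i$ I would introduce the fundamental system $u_i,v_i$ of solutions of the equation in (\ref{Resonant}) fixed by the data at the central vertex, $u_i(a)=1,\ u_i'(a)=0$ and $v_i(a)=0,\ v_i'(a)=1$, the prime denoting the arc-length derivative. Because the equation carries no first-order term, the Wronskian $u_iv_i'-u_i'v_i$ is constant and equals one; evaluating it at $a_i$ shows that $u_i'(a_i)$ and $v_i'(a_i)$ can never vanish at the same time. The first consequence is a clean description of the single-edge problems: once the Dirichlet condition $z(a)=0$ is imposed, every solution on $\omega_i$ is a multiple of $v_i$, so the $i$-th problem has a nontrivial solution precisely when $v_i'(a_i)=0$. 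I would record this by setting $S:=\{i\in\mathfrak{n}:v_i'(a_i)=0\}$ and $k:=|S|$, the number of solvable single-edge problems.

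Next I would parametrise a candidate resonant solution $\psi$. Continuity at $a$ makes its value there a single scalar $\alpha:=\psi(a)$, and on each edge $\psi=\alpha u_i+\beta_i v_i$ with $\beta_i=\frac{\mathrm{d}\psi}{\mathrm{d}\omega_i}(a)$, so that (\ref{Resonant}) is equivalent to the system
\[
\alpha\,u_i'(a_i)+\beta_i\,v_i'(a_i)=0,\quad i\in\mathfrak{n},\qquad\sum_{i\in\mathfrak{n}}\beta_i=0,
\]
in the unknowns $\alpha,\beta_1,\dots,\beta_n$, and the order of the resonance is the dimension of its solution space. The decisive point is the dichotomy governed by $k$. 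If $k\ge1$, choosing any $i\in S$ gives $v_i'(a_i)=0$, hence $u_i'(a_i)\neq0$ by the Wronskian identity, and the $i$-th equation reduces to $\alpha u_i'(a_i)=0$, forcing $\alpha=0$; the remaining edge equations then yield $\beta_i=0$ for every $i\notin S$ and leave $\beta_i$ free for $i\in S$, subject only to $\sum_{i\in S}\beta_i=0$. As this is a single equation with unit coefficients it is never vacuous, so the solution space has dimension exactly $k-1$. If instead $k=0$, every $\beta_i$ is slaved to $\alpha$ through $\beta_i=-\alpha u_i'(a_i)/v_i'(a_i)$ and the current-conservation relation becomes one scalar condition on $\alpha$, leaving a solution space of dimension at most one.

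These two computations give the equivalence at once under the standing assumption $m>1$. For the forward implication, a resonance of order $m>1$ means the solution space has dimension $m\ge2$; the case $k=0$ is thereby excluded (it yields dimension at most one), so $k\ge1$ and $m=k-1$, i.e. exactly $m+1$ of the single-edge problems are solvable. Conversely, if exactly $m+1$ of them have nontrivial solutions and $m>1$, then $k=m+1\ge3$, so the system falls under the first alternative and its solution space has dimension $k-1=m$, which is the asserted resonance order.

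The one place demanding care is the borderline configuration $k=0$, and with it the necessity of the hypothesis $m>1$: when $m=1$ the solution space can acquire its single dimension either from a degenerate $k=0$ configuration, in which the scalar condition on $\alpha$ happens to be void, or from $k=2$, so the number of solvable single-edge problems is genuinely not fixed by the resonance order, and the stated equivalence can only hold for $m>1$. The Wronskian non-degeneracy is what drives the whole argument, since it is exactly what forces $\alpha=0$ as soon as a single-edge problem is solvable and thereby collapses the system to the transparent constraint $\sum_{i\in S}\beta_i=0$; everything beyond this is routine bookkeeping.
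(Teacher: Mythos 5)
Your proof is correct, and it takes a genuinely different route from the paper's. The paper argues constructively in both directions: given nontrivial solutions $z_1,\dots,z_{m+1}$ of the single-edge problems it glues $m$ resonant solutions each supported on the pair $\omega_{i}\cup\omega_{m+1}$ (the Kirchhoff condition at $a$ being arranged by the cross-multiplication of the derivatives $\frac{\mathrm{d}z_{i}}{\mathrm{d}\omega_{i}}(a)$), and conversely it extracts the $z_i$ as restrictions $\psi_{i,\omega_i}$ of the normalized basis from Lemma~\ref{lem:eigenfunction}. You instead reduce (\ref{Resonant}) to the linear system $\alpha u_i'(a_i)+\beta_i v_i'(a_i)=0$, $\sum_i\beta_i=0$ via the fundamental system at the central vertex, and use the Wronskian to get the exact dichotomy: the resonance order equals $k-1$ when $k\geq 1$ single-edge problems are solvable, and is at most $1$ when $k=0$. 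This buys three things the paper's argument leaves implicit: an exact count rather than a lower bound of $m$ independent solutions; a proof of the facts the paper dismisses with ``it is not difficult to see'' in the converse (namely that every resonant solution vanishes at $a$ once some single-edge problem is solvable, so that $\psi_{i,\omega_i}$ really satisfies the Dirichlet condition, and that $\psi_{1,\omega_j}$ is nonzero for some $j\in\mathfrak{n}\setminus\mathfrak{m}$); and a clean explanation of why the hypothesis $m>1$ is indispensable, since $m=1$ can arise either from $k=2$ or from a degenerate $k=0$ configuration. The one caveat to note is that your edge-by-edge decomposition and the scalar Wronskian identity presuppose that $Q$ acts locally on each edge, i.e.\ is diagonal with respect to the edge indices; but the single-edge problems in the statement only make sense under that same assumption, so this restriction is inherent to the Proposition (and to the paper's own gluing construction) rather than a gap in your argument.
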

 % -------------- %
\begin{proof}
We may suppose without loss of generality that the first $m+1$ problems admit nontrivial solutions which we denote as $z_1$, $\dots$, $z_{m+1}$. To construct $m$ resonant solutions, we set
 % -------------- %
\[
\psi_{i}:= \left\{ \begin{array}{lll}
%\begin{cases}
\phantom{-}\frac{\mathrm{d}z_{m+1}}{\mathrm{d}\omega_{m+1}}(a)
z_{i} &\quad \mathrm{on} & \omega_{i},\\[.5em]
-\frac{\mathrm{d}z_{i}}{\mathrm{d}\omega_{i}}(a)
z_{m+1} &\quad \mathrm{on} & \omega_{m+1},\\[.5em]
\phantom{-}0 &\quad \mathrm{on} & \Gamma\setminus(\omega_{i}\cup\omega_{m+1})
\end{array}\right.
%\end{cases}
\]
 % -------------- %
for $i\in\mathfrak{m}$; then it is straightforward to check that
the functions $\psi_1,\dots,\psi_m$ are non-zero solutions of the problem (\ref{Resonant}).

Conversely, consider $m$ functions $\psi_1,\dots,\psi_m$ solving the problem (\ref{Resonant}); in view of Lemma~\ref{lem:eigenfunction} they can be chosen in such a way that $\psi_{i}(a_{j})=\delta_{{i}{j}}$, with $\delta_{{i}{j}}$ denoting the Kronecker delta. Define functions $z_{i},\,i\in\mathfrak{m}$, by
 % -------------- %
\[
z_{i}:=\psi_{{i},\omega_{i}};
\]
 % -------------- %
it is not difficult to see that $\psi_{1,\omega_{j}}$ is non-vanishing for some ${j}\in\mathfrak{n}\setminus\mathfrak{m}$, thus we can set $z_{m+1}:= \psi_{1,\omega_{j}}$, which concludes the proof.
\end{proof}

%%%%%%%%%%%%%%%%%%%%%%%%%%%%%%%%%%%%%%%%%%%%%%%%%%%%%%%%%%
%%  SCATTERING                                          %%
%%%%%%%%%%%%%%%%%%%%%%%%%%%%%%%%%%%%%%%%%%%%%%%%%%%%%%%%%%
\section{Convergence of scattering matrices}
\setcounter{equation}{0}

Since the resolvents of $H$ and $S_0$ differ by a finite-rank operator the corresponding wave operators exist and are complete, this is also true for the pair $(H_\varepsilon,S_0)$ in view of the hypotheses we made about the potential $Q$. Our task here is to compare the corresponding on-shell scattering operators which are, of course, $n\times n$ matrices depending on the momentum $k$. We begin with the limit operator $H$ and find scattering amplitudes $T_{{i}{j}}$, after that we will show how to construct scattering solutions for the operator $H_\varepsilon$ given the fundamental system of solutions to the equation on the graph $\Omega$; it is convenient to choose the latter in such a way that its first $m$ elements converge in ${BC}^1(\Omega)$ to the solutions of problem (\ref{Resonant}). In this way we shall be able to analyze the asymptotic behavior of the corresponding scattering amplitudes $T_{ij}^\varepsilon$ for small $\varepsilon$, in particular, to demonstrate that they converge to the corresponding quantities of the limit operator.

Consider thus the scattering on $\Gamma$ for the pair $(H,S_0)$. We employ the natural parametrization $s\in(0,+\infty)$ on the edges of $\Gamma$, where $s=0$ corresponds to the vertex $a$, and suppose that the incoming monochromatic wave $\mathrm{e}^{-\mathrm{i}ks},\:k>0$, follows the edge $\gamma_i$. The scattering solutions then have the form
 % -------------- %
\begin{equation}\label{ScatteringSolution}
\psi_i(s,k)
= \left\{ \begin{array}{lll}
%\begin{cases}
T_{ij}\,\mathrm{e}^{\mathrm{i}ks} &\quad \mathrm{on} & \gamma_j,\quad j\in\mathfrak{n}\setminus\{i\},\\[.5em]
\mathrm{e}^{-\mathrm{i}ks}+T_{ii}\,\mathrm{e}^{\mathrm{i}ks} &\quad
\mathrm{on} & \gamma_i, \end{array} \right.
%\end{cases}
\end{equation}
 % -------------- %
where $\mathrm{i}$ stands for the imaginary unit. As usual, $T_{ii}$ is the reflection amplitude on the edge $\gamma_i$ and $T_{ij}$ are the transmission amplitudes from $\gamma_i$ to the edge $\gamma_{j}$. Finding of them can be reduced to an algebraic problem. Indeed, substituting the scattering solution $\psi_i$ into the matching conditions (\ref{limit:mc}), we get the following linear system,
 % -------------- %
\begin{equation}\label{sys:T}
\mathcal{A}\mathbf{x}_i=\mathbf{a}_i,
\end{equation}
 % -------------- %
for the column vector of unknown coefficients $\mathbf{x}_i=(T_{i1},\dots,T_{in})^\top$, $i\in\mathfrak{n}$. The right-hand side of the system is an $n$-element column vector defined as follows: if $m$ is zero, then all the entries of $\mathbf{a}_i$ but $i$-th are zero and the $i$-th entry is $-1$. In the case when $m$ is positive, $\mathbf{a}_i$ is written as
 % -------------- %
\[
\hspace{-4em} \mathbf{a}_i= \left\{ \begin{array}{lll}
%\begin{cases}
(\lambda q_{1i},\dots,\lambda q_{i-1i},
\mathrm{i}k+\lambda q_{ii},\lambda q_{i+1i},\dots,
\lambda q_{mi},\theta_{im+1},\dots,\theta_{in})^\top &
\mathrm{if} & i\leq m,
\\[.5em]
(\mathrm{i}k\theta_{1i},\dots,
\mathrm{i}k\theta_{mi},0,\dots,0,-1,0,\dots,0)^\top &
\mathrm{if} & i>m \end{array} \right.
%\end{cases}
\]
 % -------------- %
with $-1$ in the $i$-th place. Furthermore, by  $\mathcal{A}$ we denote in (\ref{sys:T}) the $n\times n$ complex matrix of the form
% -------------- %
\[
\hspace{-4em}
\left(
\begin{array}{cccccccc}
\mathrm{i}k-{\lambda}{q}_{11}&-{\lambda}{q}_{12}&\dots&
-{\lambda}{q}_{1m}&\mathrm{i}k\theta_{1m+1}
&\mathrm{i}k\theta_{1m+2}&\dots&\mathrm{i}k\theta_{1n}
\\[.5em]
-{\lambda}{q}_{21}&\mathrm{i}k-{\lambda}{q}_{22}&\dots&
-{\lambda}{q}_{2m}&\mathrm{i}k\theta_{2m+1}
&\mathrm{i}k\theta_{2m+2}&\dots&\mathrm{i}k\theta_{2n}
\\[.5em]
\vdots&\vdots&\ddots&\vdots&\vdots&\vdots&\ddots&\vdots
\\[.5em]
-{\lambda}{q}_{m1}&
-{\lambda}{q}_{m2}&
\dots&
\mathrm{i}k-{\lambda}{q}_{mm}&
\mathrm{i}k\theta_{mm+1}&\mathrm{i}k\theta_{mm+2}&\dots
&\mathrm{i}k\theta_{mn}
\\[.5em]
-\theta_{1m+1}&-\theta_{2m+1}&\dots&-\theta_{mm+1}&1&0&\dots&0
\\[.5em]
\vdots&\vdots&\ddots&\vdots&\vdots&\vdots&\ddots&\vdots
\\[.5em]
-\theta_{1n}&-\theta_{2n}&\dots&-\theta_{mn}&0&0&\dots&1
\end{array}
\right)
.\]
% -------------- %
In what follows, the symbol $\mathcal{A}_{ij}$ stands for the matrix obtained from $\mathcal{A}$ by replacing its ${j}$-th column by $\mathbf{a}_i$ for any fixed $i,j\in\mathfrak{n}$. Then by Cramer's rule the scattering amplitude $T_{ij}$ can be expressed as
% -------------- %
\[
T_{ij}=\frac{\det\mathcal{A}_{ij}}{\det\mathcal{A}}.
\]
% -------------- %

A more sophisticated analysis is needed to investigate stationary scattering for the pair $(H_\varepsilon,S_0)$. Consider again the incoming monochromatic wave $\mathrm{e}^{-\mathrm{i}ks}$ approaching the vertex along the edge $\gamma_i$.
The corresponding scattering solution $\psi_i^\varepsilon$ has to solve the problem
% -------------- %
\begin{equation}\label{scattering:perturbed}
-\psi''+\frac{\lambda(\varepsilon)}{\varepsilon^2}
Q\Big(\frac{x}{\varepsilon}\Big)\psi=k^2\psi
\quad\mathrm{on}\quad\Gamma,\qquad
y\in{K}(\Gamma),
\end{equation}
% -------------- %
and, since the potential $Q$ is by assumption supported by $\Omega$, it has the form (\ref{ScatteringSolution}) on $\Gamma\setminus\Omega_\varepsilon$ with the coefficients $T_{ij}$ being replaced by $T_{ij}^\varepsilon$. Thus to solve the scattering problem for the Hamiltonian $H_\varepsilon$ we need to analyze behavior of the amplitudes $T_{ij}^\varepsilon$ as the scaling parameter $\varepsilon\to 0$.

To this aim, we employ linearly independent functions $\varphi_1^\varepsilon$, $\dots$, $\varphi_n^\varepsilon$ solving the Schr\"{o}dinger equation
% -------------- %
\begin{equation}\label{eq:varphi}
-\varphi''+\lambda(\varepsilon)Q\varphi
=\varepsilon^2k^2\varphi
\quad\mathrm{on}\quad\Omega
\end{equation}
% -------------- %
and obeying the Kirchhoff matching conditions at the vertex $a$. We can choose these functions in such a way that every $\varphi_i^\varepsilon$ depends smoothly on $\varepsilon$ and converges as $\varepsilon\to 0$ in ${BC}^1(\Omega)$ to the function $\varphi_i$ solving the zero-energy Schr\"{o}dinger equation
% -------------- %
\[
-\varphi''+Q\varphi
=0
\quad\mathrm{on}\quad\Omega,
\]
% -------------- %
and moreover, that the limit $\varphi_i$ is just the resonant solution $\psi_i$ for any $i\in\mathfrak{m}$. Let us specify some properties of such a system $\{\varphi_i^\varepsilon\}$ which we shall need in the following. First of all, in the limit $\varepsilon\to0$ we have
% -------------- %
\begin{equation}\label{conv:deriv0}
\varphi_{j}^\varepsilon(a_{i})\to\delta_{{i}{j}},
\qquad  {i},{j}\in\mathfrak{m},
\end{equation}
% -------------- %
by virtue of Lemma~\ref{lem:eigenfunction}, and at the same time,
% -------------- %
\begin{equation}\label{conv:deriv1}
\frac{\mathrm{d}\varphi_{j}^\varepsilon}
{\mathrm{d}\omega_{i}}(a_{i})
\to0,
\qquad {j}\in\mathfrak{m},\quad {i}\in\mathfrak{n},
\end{equation}
% -------------- %
Finally, if we multiply (\ref{eq:varphi}) by $\psi_{i}$ and integrate by parts using Lemma~\ref{lem:eigenfunction} again, we find that
% -------------- %
\begin{equation}\label{eq:Fredholm}
\frac{\mathrm{d}\varphi^\varepsilon_{j}}{\mathrm{d}\omega_{i}}
(a_{i})=
-\sum_{{l}\in\mathfrak{n}\setminus \mathfrak{m}}\theta_{{i}{j}}\frac{\mathrm{d}\varphi^\varepsilon_{j}}
{\mathrm{d}\omega_{l}}(a_{l})
+\varepsilon\lambda{q}^\varepsilon_{{i}{j}}+
\OO(\varepsilon^2),\qquad {i}\in\mathfrak{m},\quad {j}\in\mathfrak{n},
\end{equation}
% -------------- %
where
% -------------- %
\[
{q}^\varepsilon_{{i}{j}}:=
\int_\Omega Q\psi_{i}\varphi^\varepsilon_{j}\,\mathrm{d}\Omega
\]
% -------------- %
converges to ${q}_{{i}{j}}$ for all $i,j\in\mathfrak{m}$ as $\varepsilon\to0$.

Given thus the described fundamental system $\{\varphi_i^\varepsilon\}_{i\in\mathfrak{n}}$, we can conclude that the functions $\{\varphi_i^\varepsilon(\cdot/\varepsilon)\}_{i\in\mathfrak{n}}$ form a fundamental system of solutions for (\ref{scattering:perturbed}) on $\Omega_\varepsilon$, so that the scattering solution $\psi_i^\varepsilon$ can be constructed as a linear combination
% -------------- %
\begin{equation}\label{ScatSol:small}
\psi_i^\varepsilon=
\sum_{j\in\mathfrak{n}}
C_{ij}^\varepsilon \varphi_j^\varepsilon\Big(\frac{\cdot}{\varepsilon}\Big)
\quad\mathrm{on}\quad\Omega_\varepsilon
\end{equation}
% -------------- %
with unknown coefficients $C_{i1}^\varepsilon$, $\dots$, $C_{in}^\varepsilon$.
Since the scattering solution should be continuous at all the points $a_j^\varepsilon$ along with its first derivative, we find from relations (\ref{ScatteringSolution}) and (\ref{ScatSol:small}) that the column vector $\mathbf{x}_i^\varepsilon:= (T^\varepsilon_{i1},\dots, T^\varepsilon_{in},C^\varepsilon_{i1},\dots,C^\varepsilon_{in})^\top$ solves the linear system
% -------------- %
\[
\mathcal{A}^\varepsilon\mathbf{x}_i^\varepsilon=
\mathbf{a}_i^\varepsilon,
\]
% -------------- %
in which the right-hand side equals
% -------------- %
\[
\mathbf{a}_i^\varepsilon=(0,\dots,0,e^{-\mathrm{i}k\varepsilon},
-\mathrm{i}k\varepsilon e^{-\mathrm{i}k\varepsilon},0,\dots,0)^\top
\]
% -------------- %
with the $(2i-1)$-th and $2i$-th entries being nonzero, and the $2n\times2n$ matrix $\mathcal{A}^\varepsilon$ is of the form
% -------------- %
\[
\hspace{-4em}
\left(
\begin{array}{cccccccc}
-\mathrm{e}^{\mathrm{i}k\varepsilon}&0&0&\dots&0&\varphi_1^\varepsilon(a_1)&\dots
&\varphi_n^\varepsilon(a_1)\\[.5em]
-\mathrm{i}k\varepsilon \mathrm{e}^{\mathrm{i}k\varepsilon}&0&0&\dots&0
&\frac{\mathrm{d}\varphi_1^\varepsilon}{\mathrm{d}\omega_1}(a_1)
&\dots
&\frac{\mathrm{d}\varphi_n^\varepsilon}{\mathrm{d}\omega_1}(a_1)
\\[.5em]
0&-\mathrm{e}^{\mathrm{i}k\varepsilon}&0&\dots&0&\varphi_1^\varepsilon(a_2)
&\dots
&\varphi_n^\varepsilon(a_2)
\\[.5em]
0&-\mathrm{i}k\varepsilon \mathrm{e}^{\mathrm{i}k\varepsilon}&0&\dots&0
&\frac{\mathrm{d}\varphi_1^\varepsilon}{\mathrm{d}\omega_2}(a_2)
&\dots
&\frac{\mathrm{d}\varphi_n^\varepsilon}{\mathrm{d}\omega_2}(a_2)
\\[.5em]
\vdots&\vdots&\vdots&\ddots&\vdots&\vdots&\ddots&\vdots
\\[.5em]
0&0&0&\dots&-\mathrm{e}^{\mathrm{i}k\varepsilon}&\varphi_1^\varepsilon(a_n)&\dots
&\varphi_n^\varepsilon(a_n)
\\[.5em]
0&0&0&\dots&-\mathrm{i}k\varepsilon \mathrm{e}^{\mathrm{i}k\varepsilon}
&\frac{\mathrm{d}\varphi_1^\varepsilon}{\mathrm{d}\omega_n}(a_n)
&\dots
&\frac{\mathrm{d}\varphi_n^\varepsilon}{\mathrm{d}\omega_n}(a_n)
\end{array}
\right)
.\]
% -------------- %
Given the matrix $\mathcal{A}^\varepsilon$ we can construct matrices $\mathcal{A}_{ij}^\varepsilon$ with $i,j\in\mathfrak{n}$ in the same way as in the limit operator case, namely by replacing $j$-th column of $\mathcal{A}^\varepsilon$ by $\mathbf{a}_i^\varepsilon$. Using Cramer's rule, we obtain thus the following representation for the scattering amplitudes
% -------------- %
\[
T^\varepsilon_{ij}=\frac{\det\mathcal{A}^\varepsilon_{ij}}
{\det\mathcal{A}^\varepsilon},
\quad i,j\in\mathfrak{n}.
\]
% -------------- %

Now we are in a position to analyze small $\varepsilon$ behavior of the above determinants:

% -------------- %
\begin{lem}
In the limit $\varepsilon\to0$ the determinants of the matrices $\mathcal{A}^\varepsilon$ and $\mathcal{A}_{ij}^\varepsilon$, $i,j\in\mathfrak{n}$, exhibit the following asymptotic behavior:
% -------------- %
\[
\det\mathcal{A}^\varepsilon=\varepsilon^m\rho\det\mathcal{A}\,\big(1+o(1)\big),
\qquad
\det\mathcal{A}_{ij}^\varepsilon=
\varepsilon^m\rho\det\mathcal{A}_{ij}\,\big(1+o(1)\big),
\]
% -------------- %
where
% -------------- %
\[
\rho:=(-1)^{n(n+1)/2+m}
\left|
\begin{array}{ccc}
\frac{\mathrm{d}\varphi_{m+1}}{\mathrm{d}\omega_{m+1}}(a_{m+1})
&
\dots
&
\frac{\mathrm{d}\varphi_{n}}{\mathrm{d}\omega_{m+1}}(a_{m+1})
\\[.5em]
\vdots&\ddots&\vdots
\\[.5em]
\frac{\mathrm{d}\varphi_{m+1}}{\mathrm{d}\omega_n}(a_n)
&
\dots
&
\frac{\mathrm{d}\varphi_{n}}{\mathrm{d}\omega_n}(a_n)
\end{array}
\right|.
\]
% -------------- %
\end{lem}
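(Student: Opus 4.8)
The plan is to compute $\det\mathcal{A}^\varepsilon$ by first separating the two kinds of matching equations. I would permute the $2n$ rows so that the $n$ continuity (value) rows come first and the $n$ derivative rows last; this costs the shuffle sign $(-1)^{n(n-1)/2}$ and brings the matrix to the block form
\[
\left(\begin{array}{cc} -\mathrm{e}^{\mathrm{i}k\varepsilon}I_n & V^\varepsilon \\ -\mathrm{i}k\varepsilon\,\mathrm{e}^{\mathrm{i}k\varepsilon}I_n & D^\varepsilon \end{array}\right),
\]
where $V^\varepsilon=[\varphi_l^\varepsilon(a_j)]_{j,l}$ and $D^\varepsilon=[\frac{\mathrm{d}\varphi_l^\varepsilon}{\mathrm{d}\omega_j}(a_j)]_{j,l}$ collect the traces of the fundamental system at the vertices. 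The key observation is that the lower-left block is exactly $\mathrm{i}k\varepsilon$ times the upper-left one, so subtracting $\mathrm{i}k\varepsilon$ times each value-row from the corresponding derivative-row annihilates the lower-left block and triangularises the matrix. Hence $\det\mathcal{A}^\varepsilon=(-1)^{n(n-1)/2}(-\mathrm{e}^{\mathrm{i}k\varepsilon})^n\det M^\varepsilon$ with the $n\times n$ matrix $M^\varepsilon:=D^\varepsilon-\mathrm{i}k\varepsilon V^\varepsilon$.

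Next I extract the power $\varepsilon^m$. By (\ref{conv:deriv1}) together with the smoothness of $\varepsilon\mapsto\varphi_l^\varepsilon$, the boundary derivatives of the resonant functions obey $\frac{\mathrm{d}\varphi_l^\varepsilon}{\mathrm{d}\omega_j}(a_j)=\OO(\varepsilon)$ at every vertex for $l\in\mathfrak{m}$, so each of the first $m$ columns of $M^\varepsilon$ is $\OO(\varepsilon)$; pulling one factor $\varepsilon$ out of each gives $\det M^\varepsilon=\varepsilon^m\det\tilde M^\varepsilon$. The entrywise limits of $\tilde M^\varepsilon$ follow from (\ref{conv:deriv0}), (\ref{conv:deriv1}), (\ref{eq:Fredholm}) and the convergence $q^\varepsilon_{ij}\to q_{ij}$. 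Writing $\Theta=[\theta_{ij}]_{i\in\mathfrak{m},\,j\in\mathfrak{n}\setminus\mathfrak{m}}$, $Q_m=[q_{ij}]_{i,j\in\mathfrak{m}}$, $G=[\frac{\mathrm{d}\varphi_l}{\mathrm{d}\omega_j}(a_j)]_{j,l\in\mathfrak{n}\setminus\mathfrak{m}}$ (the matrix appearing in $\rho$), and introducing the coefficients $R_{jl}:=\lim_{\varepsilon\to0}\varepsilon^{-1}\frac{\mathrm{d}\varphi_l^\varepsilon}{\mathrm{d}\omega_j}(a_j)$ for $j\in\mathfrak{n}\setminus\mathfrak{m}$, $l\in\mathfrak{m}$ (which exist by smoothness), the block structure of the limit is
\[
\tilde M^0=\left(\begin{array}{cc} \lambda Q_m-\mathrm{i}kI_m-\Theta R & -\Theta G \\ R-\mathrm{i}k\Theta^\top & G \end{array}\right).
\]

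The heart of the argument is that the unknown coefficients $R$, which are not individually controlled by the available relations, must disappear. Taking the Schur complement with respect to the block $G$, the product $(-\Theta G)G^{-1}(R-\mathrm{i}k\Theta^\top)$ cancels the $-\Theta R$ in the upper-left corner, leaving $\det\tilde M^0=\det(G)\det(\lambda Q_m-\mathrm{i}k(I_m+\Theta\Theta^\top))$. Since both sides are polynomials in the matrix entries and the identity holds on the dense set where $G$ is invertible, it persists for all $G$, so the degenerate case $\rho=0$ needs no separate treatment. Comparing with the analogous Schur reduction of the $n\times n$ matrix $\mathcal{A}$, namely $\det\mathcal{A}=\det(\mathrm{i}k(I_m+\Theta\Theta^\top)-\lambda Q_m)$, yields $\det\tilde M^0=(-1)^m\det(G)\det\mathcal{A}$. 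Collecting the prefactors — the shuffle sign $(-1)^{n(n-1)/2}$, the factor $(-\mathrm{e}^{\mathrm{i}k\varepsilon})^n\to(-1)^n$, the $(-1)^m$ above, and the normalisation $(-1)^{n(n+1)/2+m}$ hidden in $\rho$ — the total exponent of $-1$ is $n(n-1)/2+n+m+n(n+1)/2+m=n^2+n+2m$, which is even, so the signs multiply to $1$ and give $\det\mathcal{A}^\varepsilon=\varepsilon^m\rho\det\mathcal{A}\,(1+o(1))$.

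For the numerators $\det\mathcal{A}_{ij}^\varepsilon$ the same scheme applies, since the replacing vector $\mathbf{a}_i^\varepsilon$ is again supported only on the two rows of the single vertex $a_i$; the row-shuffle and the derivative-minus-$\mathrm{i}k\varepsilon$-value elimination go through as before. For $i\ne j$, where the modified upper-left block degenerates, I would first split the replaced column by multilinearity, $\mathbf{a}_i^\varepsilon=\mathrm{e}^{-\mathrm{i}k\varepsilon}\mathbf{e}_{2i-1}-\mathrm{i}k\varepsilon\,\mathrm{e}^{-\mathrm{i}k\varepsilon}\mathbf{e}_{2i}$, reducing $\det\mathcal{A}_{ij}^\varepsilon$ to two $(2n-1)$-dimensional minors, each handled by the same triangularisation and $\varepsilon^m$-extraction; the identical cancellation of the $R$-block then produces $\varepsilon^m\rho\det\mathcal{A}_{ij}\,(1+o(1))$. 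I expect the main obstacle to be precisely this cancellation of the spurious first-order coefficients $R$: they genuinely occupy two of the four blocks of $\tilde M^0$ and are beyond the reach of the stated asymptotics, so the whole statement hinges on their vanishing under the Schur complement, i.e. on the polynomial identity $\det\tilde M^0=(-1)^m\det(G)\det\mathcal{A}$. The remaining work — the sign/parameter bookkeeping and the degenerate $i\ne j$ numerator — is routine once this mechanism is in place.
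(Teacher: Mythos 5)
Your treatment of $\det\mathcal{A}^\varepsilon$ is correct and arrives at the paper's answer, but by a genuinely different organization of the computation. The paper works on the full $2n\times2n$ determinant: it uses the Fredholm-type relation (\ref{eq:Fredholm}) to perform row operations that turn the derivative rows at the resonant vertices into rows of size $\OO(\varepsilon)$, factors $\varepsilon^m$ out of those \emph{rows}, and only then passes to the limit, so the uncontrolled first-order coefficients (your $R$) never appear. You instead block-triangularize first, reduce to the $n\times n$ matrix $M^\varepsilon=D^\varepsilon-\mathrm{i}k\varepsilon V^\varepsilon$, factor $\varepsilon^m$ out of the resonant \emph{columns}, pass to the limit, and only afterwards cancel $R$ by a Schur complement. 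The two are algebraically the same mechanism --- your reduction $\det\tilde M^0=\det G\,\det\big(\lambda Q_m-\mathrm{i}k(I_m+\Theta\Theta^\top)\big)$ is precisely the paper's row operation (add $\Theta$ times the non-resonant derivative rows to the resonant ones) applied after the limit instead of before, and as you note it needs no invertibility of $G$ since it is a polynomial identity --- and both rest on the same inputs (\ref{conv:deriv0}), (\ref{conv:deriv1}), (\ref{eq:Fredholm}), $q^\varepsilon_{ij}\to q_{ij}$ and the smooth $\varepsilon$-dependence of the fundamental system (which you do need, to know that $R$ exists at all). Your identification $\det\mathcal{A}=\det\big(\mathrm{i}k(I_m+\Theta\Theta^\top)-\lambda Q_m\big)$ and the sign bookkeeping both check out. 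The gain of your version is that the cancellation of the spurious boundary data is exhibited as a clean matrix identity; the gain of the paper's is that those data are eliminated before one ever has to divide by $\varepsilon$.

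The soft spot is the numerators $\det\mathcal{A}_{ij}^\varepsilon$. Replacing the $j$-th column by $\mathbf{a}_i^\varepsilon$ destroys the relation ``lower-left block equals $\mathrm{i}k\varepsilon$ times the upper-left block'' in that column (the factor there is $-\mathrm{i}k\varepsilon$, and this happens for $i=j$ as well, not only for $i\neq j$ as your parenthesis suggests), so your elimination leaves a residual entry $-2\mathrm{i}k\varepsilon\,\mathrm{e}^{-\mathrm{i}k\varepsilon}$ in the lower-left block and the determinant no longer factors. The multilinearity split $\mathbf{a}_i^\varepsilon=\mathrm{e}^{-\mathrm{i}k\varepsilon}\mathbf{e}_{2i-1}-\mathrm{i}k\varepsilon\,\mathrm{e}^{-\mathrm{i}k\varepsilon}\mathbf{e}_{2i}$ is the right idea, but the two resulting $(2n-1)$-minors are \emph{not} disposed of by ``the same triangularisation and $\varepsilon^m$-extraction'': deleting row $2i$ (respectively $2i-1$) breaks the value/derivative pairing at the vertex $a_i$ on which both your block elimination and your column-wise extraction rely; for $i\in\mathfrak{m}$ the second minor is generically only of order $\varepsilon^{m-1}$, and it is the explicit prefactor $\varepsilon$ that restores the order $\varepsilon^m$, so \emph{both} terms contribute to the leading coefficient and must be recombined to reproduce $\det\mathcal{A}_{ij}$. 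This can certainly be carried out, and the paper itself dismisses the numerators with ``in a similar way,'' but in your scheme this step is materially less routine than in the paper's (whose row operations involve only the last $n$ columns, untouched by the replacement), and it needs to be written out before the lemma is fully proved.
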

% -------------- %
\begin{proof}
We will discuss the behavior of $\det\mathcal{A}^\varepsilon$ only since the corresponding proofs for $\det\mathcal{A}^\varepsilon_{ij}$ proceed in the analogous way. We note that every element of $\mathcal{A}^\varepsilon$ with the indices $(2i,n+j)$ is of the form $\frac{\mathrm{d}\varphi_j^\varepsilon}{\mathrm{d}\omega_i}(a_i)$, which can be in view of (\ref{eq:Fredholm}) written as
% -------------- %
\[
-\sum_{l\in\mathfrak{n}\setminus \mathfrak{m}}\theta_{ij}\frac{\mathrm{d}\varphi^\varepsilon_j}
{\mathrm{d}\omega_l}(a_l)
+\varepsilon{\lambda}{q}^\varepsilon_{ij}+
\OO(\varepsilon^2)
\]
% -------------- %
for any $i\in\mathfrak{m}$ and $j\in\mathfrak{n}$. Fix $i\in\mathfrak{m}$ and add to the $2i$-th row the $2j$-th row multiplied by $\theta_{ij}$, $j\in\mathfrak{n}$. In this way $\det\mathcal{A}^\varepsilon$ is equal to
% -------------- %
\[
\hspace{-6em}
\fontsize{10pt}{1pt}
\left|
\begin{array}{ccccccccc}
-\mathrm{e}^{\mathrm{i}k\varepsilon}&\dots&0&0&\dots&0&
\varphi_1^\varepsilon(a_1)&\dots
&\varphi_n^\varepsilon(a_1)
\\[.5em]
-\mathrm{i}k\varepsilon \mathrm{e}^{\mathrm{i}k\varepsilon}&\dots&0
&-\mathrm{i}k\varepsilon \mathrm{e}^{\mathrm{i}k\varepsilon}\theta_{1m+1}&\dots&
-\mathrm{i}k\varepsilon \mathrm{e}^{\mathrm{i}k\varepsilon}\theta_{1n}
&
\varepsilon{\lambda}{q}_{11}^\varepsilon+\OO(\varepsilon^2)
&\dots
&
\varepsilon{\lambda}{q}_{n1}^\varepsilon+\OO(\varepsilon^2)
\\[.5em]
\vdots&\ddots&\vdots&\vdots&\ddots&\vdots&\vdots&\ddots&\vdots
\\[.5em]
0&\dots&-\mathrm{e}^{\mathrm{i}k\varepsilon}&
0&\dots&0&
\varphi_1^\varepsilon(a_m)&\dots
&\varphi_n^\varepsilon(a_m)
\\[.5em]
0&\dots&-\mathrm{i}k\varepsilon \mathrm{e}^{\mathrm{i}k\varepsilon}
&
-\mathrm{i}k\varepsilon \mathrm{e}^{\mathrm{i}k\varepsilon}\theta_{mm+1}&\dots&
-\mathrm{i}k\varepsilon \mathrm{e}^{\mathrm{i}k\varepsilon}\theta_{mn}
&
\varepsilon{\lambda}{q}_{1m}^\varepsilon+\OO(\varepsilon^2)
&\dots
&
\varepsilon{\lambda}{q}_{nm}^\varepsilon+\OO(\varepsilon^2)
\\[.5em]
0&\dots&0&
-\mathrm{e}^{\mathrm{i}k\varepsilon}&\dots&0&
\varphi_1^\varepsilon(a_{m+1})&\dots
&\varphi_n^\varepsilon(a_{m+1})
\\[.5em]
0&\dots&0&
-\mathrm{i}k\varepsilon \mathrm{e}^{\mathrm{i}k\varepsilon}&\dots&0
&\frac{\mathrm{d}\varphi_1^\varepsilon}{\mathrm{d}\omega_{m+1}}(a_{m+1})
&\dots
&\frac{\mathrm{d}\varphi_n^\varepsilon}{\mathrm{d}\omega_{m+1}}(a_{m+1})
\\[.5em]
\vdots&\ddots&\vdots&\vdots&\ddots&\vdots&\vdots&\ddots&\dots
\\[.5em]
0&\dots&0&\dots&0&-\mathrm{e}^{\mathrm{i}k\varepsilon}
&\varphi_1^\varepsilon(a_n)&\dots
&\varphi_n^\varepsilon(a_n)
\\[.5em]
0&\dots&0&\dots&0&-\mathrm{i}k\varepsilon \mathrm{e}^{\mathrm{i}k\varepsilon}
&\frac{\mathrm{d}\varphi_1^\varepsilon}{\mathrm{d}\omega_n}(a_n)
&\dots
&\frac{\mathrm{d}\varphi_n^\varepsilon}{\mathrm{d}\omega_n}(a_n)
\end{array}
\right|
.\]
% -------------- %
For any $i\in\mathfrak{m}$ one can factor $\varepsilon$ out of the $2i$-th row. Next we rearrange the rows in the following way:
first we put the rows with the odd numbers $2i-1$, then we write the ones with the even numbers $2i$, both for $i\in\mathfrak{m}$. Next we put the rows with the odd numbers $2i-1$ and finally with the even numbers $2i$, now for $i\in\mathfrak{n}\setminus\mathfrak{m}$. In this way $\det\mathcal{A}^\varepsilon$ coincides with the determinant
$\Delta^\varepsilon$ defined as
% -------------- %
\[
\hspace{-6em}
\fontsize{10pt}{1pt}
\left|
\begin{array}{ccccccccc}
-\mathrm{e}^{\mathrm{i}k\varepsilon}&\dots&0&0&\dots&0
&\varphi_1^\varepsilon(a_1)&\dots
&\varphi_n^\varepsilon(a_1)
\\[.5em]
\vdots&\ddots&\vdots&\vdots&\ddots&\vdots&\vdots&\ddots&\dots
\\[.5em]
0&\dots&-\mathrm{e}^{\mathrm{i}k\varepsilon}&
0&\dots&0&
\varphi_1^\varepsilon(a_m)&\dots
&\varphi_n^\varepsilon(a_m)
\\[.5em]
-\mathrm{i}k \mathrm{e}^{\mathrm{i}k\varepsilon}&\dots&0
&-\mathrm{i}k \mathrm{e}^{\mathrm{i}k\varepsilon}\theta_{1m+1}&\dots&
-\mathrm{i}k \mathrm{e}^{\mathrm{i}k\varepsilon}\theta_{1n}
&
{\lambda}{q}_{11}^\varepsilon
&\dots
&
{\lambda}{q}_{n1}^\varepsilon
\\[.5em]
\vdots&\ddots&\vdots&\vdots&\ddots&\vdots&\vdots&\ddots&\vdots
\\[.5em]
0&\dots&-\mathrm{i}k \mathrm{e}^{\mathrm{i}k\varepsilon}
&
-\mathrm{i}k \mathrm{e}^{\mathrm{i}k\varepsilon}\theta_{mm+1}&\dots&
-\mathrm{i}k\mathrm{e}^{\mathrm{i}k\varepsilon}\theta_{mn}
&
{\lambda}{q}_{1m}^\varepsilon
&\dots
&
{\lambda}{q}_{nm}^\varepsilon
\\[.5em]
0&\dots&0&
-\mathrm{e}^{\mathrm{i}k\varepsilon}&\dots&0&
\varphi_1^\varepsilon(a_{m+1})&\dots
&\varphi_n^\varepsilon(a_{m+1})
\\[.5em]
\vdots&\ddots&\vdots&\vdots&\ddots&\vdots&\vdots&\ddots&\vdots
\\[.5em]
0&\dots&0&\dots&0&-\mathrm{e}^{\mathrm{i}k\varepsilon}
&\varphi_1^\varepsilon(a_n)&\dots
&\varphi_n^\varepsilon(a_n)
\\[.5em]
0&\dots&0&
-\mathrm{i}k\varepsilon \mathrm{e}^{\mathrm{i}k\varepsilon}&\dots&0
&\frac{\mathrm{d}\varphi_1^\varepsilon}{\mathrm{d}\omega_{m+1}}(a_{m+1})
&\dots
&\frac{\mathrm{d}\varphi_n^\varepsilon}{\mathrm{d}\omega_{m+1}}(a_{m+1})
\\[.5em]
\vdots&\ddots&\vdots&\vdots&\ddots&\vdots&\vdots&\ddots&\vdots
\\[.5em]
0&\dots&0&\dots&0&-\mathrm{i}k\varepsilon \mathrm{e}^{\mathrm{i}k\varepsilon}
&\frac{\mathrm{d}\varphi_1^\varepsilon}{\mathrm{d}\omega_n}(a_n)
&\dots
&\frac{\mathrm{d}\varphi_n^\varepsilon}{\mathrm{d}\omega_n}(a_n)
\end{array}
\right|
,
\]
% -------------- %
multiplied by $(-1)^{[(m-1)m+(n-m-1)(n-m)]/2}\varepsilon^m$. Adding to the $(m+i)$-th row the $i$-th row multiplied by $-\mathrm{i}k$, we get the following formula for the determinant $\Delta^\varepsilon$:
% -------------- %
\[
\hspace{-6em}
\fontsize{10pt}{1pt}
\left|
\begin{array}{ccccccccc}
-\mathrm{e}^{\mathrm{i}k\varepsilon}&\dots&0&0&\dots&0
&\varphi_1^\varepsilon(a_1)&\dots
&\varphi_n^\varepsilon(a_1)
\\[.5em]
\vdots&\ddots&\vdots&\vdots&\ddots&\vdots&\vdots&\ddots&\vdots
\\[.5em]
0&\dots&-\mathrm{e}^{\mathrm{i}k\varepsilon}&
0&\dots&0&
\varphi_1^\varepsilon(a_m)&\dots
&\varphi_n^\varepsilon(a_m)
\\[.5em]
0&\dots&0
&-\mathrm{i}k \mathrm{e}^{\mathrm{i}k\varepsilon}\theta_{1m+1}&\dots&
-\mathrm{i}k \mathrm{e}^{\mathrm{i}k\varepsilon}\theta_{1n}
&
{\lambda}{q}_{11}^\varepsilon-\mathrm{i}k\varphi^\varepsilon_1(a_1)
&\dots
&
{\lambda}{q}_{n1}^\varepsilon-\mathrm{i}k\varphi^\varepsilon_n(a_1)
\\[.5em]
\vdots&\ddots&\vdots&\vdots&\ddots&\vdots&\vdots&\ddots&\vdots
\\[.5em]
0&\dots&0
&
-\mathrm{i}k \mathrm{e}^{\mathrm{i}k\varepsilon}\theta_{mm+1}&\dots&
-\mathrm{i}k\mathrm{e}^{\mathrm{i}k\varepsilon}\theta_{mn}
&
{\lambda}{q}_{1m}^\varepsilon-\mathrm{i}k
\varphi^\varepsilon_1(a_m)
&\dots
&
{\lambda}{q}_{nm}^\varepsilon-\mathrm{i}k
\varphi^\varepsilon_n(a_m)
\\[.5em]
0&\dots&0&
-\mathrm{e}^{\mathrm{i}k\varepsilon}&\dots&0&
\varphi_1^\varepsilon(a_{m+1})&\dots
&\varphi_n^\varepsilon(a_{m+1})
\\[.5em]
\vdots&\ddots&\vdots&\vdots&\ddots&\vdots&\vdots&\ddots&\vdots
\\[.5em]
0&\dots&0&\dots&0&-\mathrm{e}^{\mathrm{i}k\varepsilon}
&\varphi_1^\varepsilon(a_n)&\dots
&\varphi_n^\varepsilon(a_n)
\\[.5em]
0&\dots&0&
-\mathrm{i}k\varepsilon \mathrm{e}^{\mathrm{i}k\varepsilon}&\dots&0
&\frac{\mathrm{d}\varphi_1^\varepsilon}{\mathrm{d}\omega_{m+1}}(a_{m+1})
&\dots
&\frac{\mathrm{d}\varphi_n^\varepsilon}{\mathrm{d}\omega_{m+1}}(a_{m+1})
\\[.5em]
\vdots&\ddots&\vdots&\vdots&\ddots&\vdots&\vdots&\ddots&\vdots
\\[.5em]
0&\dots&0&\dots&0&-\mathrm{i}k\varepsilon \mathrm{e}^{\mathrm{i}k\varepsilon}
&\frac{\mathrm{d}\varphi_1^\varepsilon}{\mathrm{d}\omega_n}(a_n)
&\dots
&\frac{\mathrm{d}\varphi_n^\varepsilon}{\mathrm{d}\omega_n}(a_n)
\end{array}
\right|
.\]
% -------------- %
In view of (\ref{conv:deriv0}) and (\ref{conv:deriv1}), we find that $\Delta^\varepsilon$
tends to $(-1)^{m+n+m(n-m)}\rho\det\mathcal{A}$ as $\varepsilon\to0$, and consequently,
% -------------- %
\[
\det\mathcal{A}^\varepsilon=
\varepsilon^m\rho\det\mathcal{A}\big(1+o(1)\big)
\quad \mathrm{as} \quad \varepsilon\to0\,;
\]
% -------------- %
the asymptotic expansion $\det\mathcal{A}_{ij}^\varepsilon=\varepsilon^m\rho \det\mathcal{A}_{ij}\big(1+o(1)\big)$ for any $i,j\in\mathfrak{n}$ is obtained in a similar way.
\end{proof}

\noindent
{\it Theorem~\ref{thm:Scat}} is now a direct consequence of this lemma in combination with the explicit expressions for the scattering amplitudes of the involved operators.

\medskip

We conclude this section with the analysis of the limiting scattering matrix $\mathcal{S}:=\{T_{ij}\}_{i,j\in\mathfrak{n}}$.
We first observe that if $m$ is zero, i.e., if the Schr\"{o}dinger operator $S$ has no zero-energy resonances,
then $\mathcal{S}=-I$ as it should be expected in the situation when the graph $\Gamma$ decomposes into disconnected edges, each being described by the Dirichlet Laplacian on the respective halfline.

Consider next the small $k$ behavior of the scattering matrix $\mathcal{S}$ and define $n\times n$ matrices $\mathcal{B}$ and $\mathcal{B}_{ij}$ as the matrices $\mathcal{A}$ and $\mathcal{A}_{ij}$ introduced above but with $k=0$. In the limit $k\to0$ the scattering matrix $\mathcal{S}$ tends obviously to the matrix whose entries are defined as $\det \mathcal{B}_{ij}/ \det \mathcal{B}$. Since the matrix $\mathcal{B}_{ii}$ differs from $\mathcal{B}$ by the sign of the $i$-th column only, we conclude that the diagonal elements of $\mathcal{S}$ tend to $-1$ as $k\to0$. Furthermore, using the fact that $i$-th and $j$-th columns of $\mathcal{B}_{ij}$ differ by sign for $j\in\mathfrak{n}$ and $i\in\mathfrak{m}\setminus\{j\}$, we infer that $\det \mathcal{B}_{ij}=0$. On the other hand, for $i\in\mathfrak{n}\setminus\mathfrak{m}$ we expand the determinant of $\mathcal{B}_{ij}$ in terms of the ${j}$-th column; this allows us to conclude that $\det \mathcal{B}_{ij}$ can be written as $(-1)^{i+j}$ multiplied by the minor of the element of $\det \mathcal{B}_{ij}$ with the indices $(i,j)$. Since for all ${j}\in\mathfrak{n}\setminus\{i\}$ such minors contain zero column, it follows that $\det \mathcal{B}_{ij}$ is zero again.

What is important, however, is that the above argument works only under the assumption $\lambda\ne 0$ because otherwise the limit of the denominator is zero and one would obtain an indeterminate expression. We see that for a non-vanishing $\lambda$ the scattering matrix tends to $-I$ as $k\to0$, which means in view of Theorem~\ref{thm:Scat} that the potential $\lambda(\varepsilon) \varepsilon^{-2} Q(\varepsilon^{-1}\cdot)$ becomes asymptotically opaque in the low energy limit.

To treat the case left out, $\lambda=0$, we introduce the $n\times n$ and $1\times n$ matrices
% -------------- %
\[
\mathcal{C}:=
\left(
\begin{array}{cccccc}
1&\dots&0&\theta_{1m+1}&\dots&\theta_{1n}
\\[.5em]
\vdots&\ddots&\vdots&\vdots&\ddots&\vdots
\\[.5em]
0&\dots&1&\theta_{mm+1}&\dots&\theta_{mn}
\\[.5em]
-\theta_{1m+1}&\dots&-\theta_{mm+1}&1&\dots&0
\\[.5em]
\vdots&\ddots&\vdots&\vdots&\ddots&\vdots
\\[.5em]
-\theta_{1n}&\dots&-\theta_{mn}&0&\dots&1
\end{array}
\right)
,
\]
% -------------- %
and
% -------------- %
\[
\mathbf{c}_i:=
\left\{
\begin{array}{lll}
(0,\dots,0,
1,0,\dots,
0,\theta_{im+1},\dots,\theta_{in})^\top
&\mathrm{if}& i\leq m,
\\[.5em]
(\theta_{1i},\dots,
\theta_{mi},0,\dots,0,-1,0,\dots,0)^\top
&
\mathrm{if}& i> m,
\end{array}
\right.
\]
% -------------- %
respectively. Arguing as above, we use $\mathcal{C}$ and $\mathbf{c}_i$ to construct the matrices $\mathcal{C}_{ij}$ which make it possible to write the scattering matrix elements as $\det\mathcal{C}_{ij}/\det \mathcal{C}$. It is independent of the momentum $k$ which is not surprising; recall that the conditions (\ref{limit:mc}) with $\lambda=0$ do not couple function values and derivatives, and as a result, the corresponding vertex coupling is scale-invariant -- cf.~Remark~\ref{bc-rem}b. Note that $\mathcal{S}$ is in general different from $-I$ as the example of a star graph with Kirchhoff coupling shows \cite{ExSe89}, hence the assumption $\lambda\ne 0$ in the previous paragraph is indeed substantial.

Let us finally comment on the large $k$ behavior of the scattering amplitudes. It is easy to see from (\ref{sys:T}) that in the limit $k\to\infty$ the value of $\lambda$ is not important and the scattering matrix $\mathcal{S}$ tends to the scattering matrix corresponding to the scale-invariant situation. Consequently, the scattering amplitudes for the Schr\"{o}dinger operator $-\frac{\mathrm{d}^2}{\mathrm{d}x^2} +\lambda(\varepsilon)\varepsilon^{-2}Q(\varepsilon^{-1}\cdot)$ coincide asymptotically with that of $-\frac{\mathrm{d}^2}{\mathrm{d}x^2} +\varepsilon^{-2}Q(\varepsilon^{-1}\cdot)$. It means, in particular, that none of the matching conditions (\ref{limit:mc}) is of the type conventionally called $\delta'$ as we have mentioned in the introduction.

\subsection*{Acknowledgments}
The authors are grateful to Rostyslav Hryniv for careful reading of the manuscript
and valuable remarks.
The research was supported by the Czech Science Foundation within
the project P203/11/0701 and by the European Union with the project ``Support for research teams on CTU'' CZ.1.07/2.3.00/30.0034.

\subsection*{References}

\end{document}